\tikzset{cross/.style={path picture={
      \draw[black] ($ (path picture bounding box.south east)!.25!(path picture bounding box.north west) $) -- ($ (path picture bounding box.south east)!.75!(path picture bounding box.north west) $);
    \draw[black] ($ (path picture bounding box.south west)!.25!(path picture bounding box.north east) $) -- ($ (path picture bounding box.south west)!.75!(path picture bounding box.north east) $);
  }}}
\newcommand{\dmid}{\parallel}
\DeclareRobustCommand{\treeplexproduct}{%
  \tikz[]{%
    \draw (0,0) circle[radius=1.1mm];%
    \draw (-.55mm, -.55mm) -- (.55mm, .55mm);%
    \draw (-.55mm, .55mm) -- (.55mm, -.55mm);%
  }%
}
\newcommand{\subt}[1]{\downarrow {#1}}
\newcommand{\seqf}[1]{\cX_{\subt{#1}}}
\newcommand{\xsubt}[1]{\vec{x}^t_{\subt{#1}}}
\newcommand{\symp}[1]{\Delta^{\!#1}}
\newcommand{\zsubt}[1]{\vec{z}^t_{\subt{#1}}}
\newcommand{\xhat}{{\hat x}}
\newcommand{\childinfosets}[1]{\mathcal{C}_{#1}}
\newcommand{\convsets}{\mathcal{J}}
\renewcommand{\vec}[1]{\bm{#1}}
\newcommand{\mat}[1]{\bm{#1}}
\title{Optimistic Regret Minimization for Extensive-Form Games via Dilated Distance-Generating Functions\thanks{This paper was accepted for publication at NeurIPS 2019.}}
\author{Gabriele Farina\\
Computer Science Department\\
Carnegie Mellon University\\
\texttt{gfarina@cs.cmu.edu}
\And
Christian Kroer\\
IEOR Department\\
Columbia University\\
\texttt{christian.kroer@columbia.edu}
\And
Tuomas Sandholm\\
Computer Science Department, CMU\\
Strategic Machine, Inc.\\
Strategy Robot, Inc.\\
Optimized Markets, Inc.\\
\texttt{sandholm@cs.cmu.edu}
}
\begin{document}
\maketitle

\begin{abstract}
  We study the performance of optimistic regret-minimization algorithms for both minimizing regret in, and computing Nash equilibria of, zero-sum extensive-form games. In order to apply these algorithms to extensive-form games, a distance-generating function is needed. We study the use of the dilated entropy and dilated Euclidean distance functions. For the dilated Euclidean distance function we prove the first explicit bounds on the strong-convexity parameter for general treeplexes. Furthermore, we show that the use of dilated distance-generating functions enable us to decompose the mirror descent algorithm, and its optimistic variant, into local mirror descent algorithms at each information set. This decomposition mirrors the structure of the counterfactual regret minimization framework, and enables important techniques in practice, such as distributed updates and pruning of cold parts of the game tree.
Our algorithms provably converge at a rate of $T^{-1}$, which is superior to prior counterfactual regret minimization algorithms. We experimentally compare to the popular algorithm CFR+, which has a theoretical convergence rate of $T^{-0.5}$ in theory, but is known to often converge at a rate of $T^{-1}$, or better, in practice. We give an example matrix game where CFR+ experimentally converges at a relatively slow rate of $T^{-0.74}$, whereas our optimistic methods converge faster than $T^{-1}$. We go on to show that our fast rate also holds in the Kuhn poker game, which is an extensive-form game. For games with deeper game trees however, we find that CFR+ is still faster. Finally we show that when the goal is minimizing regret, rather than computing a Nash equilibrium, our optimistic methods can outperform CFR+, even in deep game trees.
\end{abstract}

\section{Introduction}

Extensive-form games (EFGs) are a broad class of games that can model sequential interaction, imperfect information, and stochastic outcomes.
To operationalize them they must be accompanied by techniques for computing game-theoretic equilibria such as Nash equilibrium. A notable success story of this is poker: \citet{Bowling15:Heads} computed a near-optimal Nash equilibrium for heads-up limit Texas hold'em, while \citet{Brown17:Superhuman} beat top human specialist professionals at the larger game of heads-up no-limit Texas hold'em. Solving extremely large EFGs relies on many methods for dealing with the scale of the problem: abstraction methods are sometimes used to create smaller games~\citep{Gilpin07:Lossless, Lanctot12:No,Kroer14:Extensive,Ganzfried14:Potential,Brown15:Hierarchical,Kroer16:Imperfect}, endgame solving is used to compute refined solutions to the end of the game in real time~\citep{Burch14:Solving,Ganzfried15:Endgame,Moravcik16:Refining}, and recently depth-limited subgame solving has been very successfully used in real time~\citep{Moravvcik17:DeepStack,Brown18:Depth,Brown19:Superhuman}. At the core of all these methods is a reliance on a fast algorithm for computing approximate Nash equilibria of the abstraction, endgame, and/or depth-limited subgame~\cite{Moravvcik17:DeepStack,Brown18:Depth,Brown19:Superhuman}. In practice the most popular method has been the {\cfrp} algorithm~\citep{Zinkevich07:Regret,Tammelin15:Solving}, which was used within all three two-player poker breakthroughs~\citep{Bowling15:Heads,Moravvcik17:DeepStack,Brown17:Superhuman}. {\cfrp} has been shown to converge to a Nash equilibrium at a rate of $T^{-0.5}$, but in practice it often performs much better, even outperforming faster methods that have a guaranteed rate of $T^{-1}$~\citep{Brown17:Dynamic,Kroer18:Faster,Kroer18:Solving,Brown19:Solving}.

Recently, another class of optimization algorithms has been shown to have appealing theoretical properties. \emph{Online convex optimization} (OCO) algorithms are online variants of first-order methods: at each timestep $t$ they receive some loss function $\ell^t$ (often a linear loss which is a gradient of some underlying loss function), and must then recommend a point from some convex set based on the series of past points and losses. While these algorithms are generally known to have a $T^{-0.5}$ rate of convergence when solving static problems, a recent series of papers showed that when two \emph{optimistic} OCO algorithms are faced against each other, and they have some estimate of the next loss faced, a rate of $T^{-1}$ can be achieved~\citep{Rakhlin13:Online,Rakhlin13:Optimization,Syrgkanis15:Fast}.
In this paper we investigate the application of these algorithms to EFG solving, both in the regret-minimization setting, and for computing approximate Nash equilibria at the optimal rate of $O(T^{-1})$. The only prior attempt at using optimistic OCO algorithm in extensive-form games is due to~\citet{Farina19:Stable}. In that paper, the authors show that by restricting to the weaker notion of \emph{stable-predictive optimism}, one can mix and match local stable-predictive optimistic algorithm at every decision point in the game as desired and obtain an overall stable-predictive optimistic algorithm that enables $O(T^{-0.75})$ convergence to Nash equilibrium. The approach we adopt in this paper is different from that of~\citet{Farina19:Stable} in that our construction does not allow one to pick different regret minimizers for different decision points; however, our algorithms converge to Nash equilibrium at the improved rate $O(T^{-1})$.

The main hurdle to overcome is that in all known OCO algorithms a \emph{distance-generating function} (DGF) is needed to maintain feasibility via proximal operators and ensure that the stepsizes of the algorithms are appropriate for the convex set at hand.
\color{black}
For the case of EFGs, the convex set is known as a \emph{treeplex}, and the so-called dilated DGFs are known to have appealing properties, including closed-form iterate updates and strong convexity properties~\citep{Hoda10:Smoothing,Kroer18:Faster}. In particular, the dilated entropy DGF, which applies the negative entropy at each information set, is known to lead to the state-of-the-art theoretical rate on convergence for iterative methods~\citep{Kroer18:Faster}. Another potential DGF is the dilated Euclidean DGF, which applies the $\ell_2$ norm as a DGF at each information set. We show the first explicit bounds on the strong-convexity parameter for the dilated Euclidean DGF when applied to the strategy space of an EFG. We go on to show that when a dilated DGF is paired with the \emph{online mirror descent} (OMD) algorithm, or its optimistic variant, the resulting algorithm decomposes into a recursive application of local online mirror descent algorithms at each information set of the game. This decomposition is similar to the decomposition achieved in the counterfactual regret minimization framework, where a local regret minimizer is applied on the counterfactual regret at each information set. This localization of the updates along the tree structure enables further techniques, such as distributing the updates~\cite{Brown17:Superhuman,Brown15:Hierarchical} or skipping updates on cold parts of the game tree~\cite{Brown17:Reduced}.

It is well-known that the entropy DGF is the theoretically superior DGF when applied to optimization over a simplex~\citep{Hoda10:Smoothing}.
For the treeplex case where the entropy DGF is used at each information set,
\citet{Kroer18:Faster} showed that the strong theoretical properties of the
simplex entropy DGF generalize to the dilated entropy DGF on a treeplex (with earlier weaker results shown by \citet{Kroer15:Faster}). Our results on the dilated Euclidean DGF confirm this finding, as the dilated Euclidean DGF has a similar strong convexity parameter, but with respect to the $\ell_2$ norm, rather than the $\ell_1$ norm for dilated entropy (having strong convexity with respect to the $\ell_1$ norm leads to a tighter convergence-rate bound because it gives a smaller matrix norm, another important constant in the rate).

In contrast to these theoretical results, for the case of computing a Nash equilibrium in matrix games it has been found experimentally that the Euclidean DGF often performs much better than the entropy DGF. This was shown by \citet{Chambolle16:Ergodic} when using a particular accelerated primal-dual algorithm~\citep{Chambolle11:First,Chambolle16:Ergodic} and using the \emph{last iterate} (as opposed to the uniformly-averaged iterate as the theory suggests). \citet{Kroer19:First} recently showed that this extends to the theoretically-sound case of using linear or quadratic averaging in the same primal-dual algorithm, or in mirror prox~\citep{Nemirovski04:Prox} (the offline variant of optimistic OMD). In this paper we replicate these results when using OCO algorithms: first we show it on a particular matrix game, where we also exhibit a slow $T^{-0.74}$ convergence rate of {\cfrp} (the slowest {\cfrp} rate seen to the best of our knowledge). We show that for the Kuhn poker game the last iterate of optimistic OCO algorithms with the dilated Euclidean DGF also converges extremely fast. In contrast to this, we show that for deeper EFGs {\cfrp} is still faster. Finally we compare the performance of {\cfrp} and optimistic OCO algorithms for minimizing regret, where we find that OCO algorithms perform better.


\section{Regret Minimization Algorithms}

In this section we present the regret-minimization algorithms that we will work with. We will operate within the framework of \emph{online convex optimization}~\citep{Zinkevich03:Online}. In this setting, a decision maker repeatedly plays against an unknown environment by making decision $\vec{x}^1,\vec{x}^2,\ldots \in \cX$ for some convex compact set $\cX$. After each decision $\vec{x}^t$ at time $t$, the decision maker faces a \emph{linear loss} $\vec{x}^t \mapsto \langle \vec{\ell}^t, \vec{x}^t \rangle$, where $\vec{\ell}^t$ is a vector in $\cX$. Summarizing, the decision maker makes a decision $\vec{x}^{t+1}$ based on the sequence of losses $\vec{\ell}^1,\ldots,\vec{\ell}^t$ as well as the sequence of past iterates $\vec{x}^1,\ldots,\vec{x}^t$.

The quality metric for a regret minimizer is its \emph{cumulative regret}, which is the difference between the loss cumulated by the sequence
of decisions $\vec{x}^1, \dots, \vec{x}^T$ and the loss that would have been cumulated by playing the best-in-hindsight
time-independent decision $\hat{\vec{x}}$. Formally, the cumulative regret up to time $T$ is
\[
  R^T \defeq \sum_{t=1}^T \langle \vec{\ell}^t, \vec{x}^t\rangle - \min_{\hat{\vec{x}} \in \cX} \bigg\{ \sum_{t=1}^T \langle \vec{\ell}^t, \hat{\vec{x}} \rangle \bigg\}.
\]
A ``good'' regret minimizer is such that the cumulative regret grows \emph{sublinearly in $T$}.


The algorithms we consider assume access to a \emph{distance-generating function} $d: \cX \rightarrow \R$, which is 1-strongly convex (with respect to some norm) and continuously differentiable on the interior of $\cX$. Furthermore $d$ should be such that the gradient of the convex conjugate $\nabla d(\vec{g}) = \argmax_{\vec{x} \in \cX} \langle \vec{g}, \vec{x} \rangle - d(\vec{x})$ is easy to compute. Following \citet{Hoda10:Smoothing} we say that a DGF satisfying these properties is a \emph{nice} DGF for $\cX$.
From $d$ we also construct the \emph{Bregman divergence}
  $
    D(\vec{x}\dmid \vec{x}') \defeq d(\vec{x}) - d(\vec{x}') - \langle \nabla d(\vec{x}'), \vec{x} - \vec{x}' \rangle.
  $

First we present two classical regret minimization algorithms.
The \emph{online mirror descent} (OMD) algorithm produces iterates according to the rule
\begin{align}
  \label{eq:omd}
\vec{x}^{t+1} = \argmin_{\vec{x} \in \cX} \bigg\{ \langle \vec{\ell}^{t}, \vec{x} \rangle + \frac{1}{\eta} D(\vec{x} \dmid \vec{x}^{t})  \bigg\}.
\end{align}

The \emph{follow the regularized leader} (FTRL) algorithm produces iterates according to the rule~\citep{Schwartz07:Primal}
\begin{align}
  \label{eq:ftrl}
\vec{x}^{t+1} = \argmin_{\vec{x} \in \cX} \bigg\{ \bigg\langle \sum_{\tau=1}^{t} \vec{\ell}^\tau, \vec{x}\bigg\rangle + \frac{1}{\eta} d(\vec{x})  \bigg\}.
\end{align}

OMD and FTRL satisfy regret bounds of the form
$R^T \leq  O\left( D(\vec{x}^*\| \vec{x}^1) L \sqrt{T} \right)$
(e.g. \citet{Hazan16:Introduction}).


The \emph{optimistic} variants of the classical regret minimization algorithms take as input an additional vector $\vec{m}^{t+1}$, which is an estimate of the loss faced at time $t+1$~\citep{Chiang12:Online,Rakhlin13:Online}.
Optimistic OMD produces iterates according to the rule~\citep{Rakhlin13:Online} (note that $\vec{x}^{t+1}$ is produced before seeing $\vec{\ell}^{t+1}$, while $\vec{z}^{t+1}$ is produced after)
\begin{align}
  \label{eq:optimistic omd}
\vec{x}^{t+1} = \argmin_{\vec{x} \in \cX} \bigg\{\!  \langle \vec{m}^{t+1}, \vec{x} \rangle + \frac{1}{\eta} D(\vec{x} \dmid \vec{z}^{t}) \! \bigg\}, \ \
\vec{z}^{t+1} = \argmin_{\vec{z} \in \cX} \bigg\{ \langle \vec{\ell}^{t+1}, \vec{z}\rangle + \frac{1}{\eta} D(\vec{z}\dmid \vec{z}^{t})  \bigg\}.
\end{align}

Thus it is like OMD, except that $\vec{x}^{t+1}$ is generated by an additional step taken using the loss estimate. This additional step is transient in the sense that $\vec{x}^{t+1}$ is not used as a center for the next iterate.
OFTRL produces iterates according to the rule~\citep{Rakhlin13:Online,Syrgkanis15:Fast}
\begin{align}
  \label{eq:oftrl}
\vec{x}^{t+1} = \argmin_{\vec{x} \in \cX} \bigg\{ \bigg\langle \vec{m}^{t+1} + \sum_{\tau=1}^{t} \vec{\ell}^\tau, \vec{x}\bigg\rangle + \frac{1}{\eta} d(\vec{x})  \bigg\}.
\end{align}
Again the loss estimate is used in a transient way: it is used as if we already saw the loss at time $t+1$, but then discarded and not used in future iterations.

\subsection{Connection to Saddle Points}\label{sec:bspp}
A \emph{bilinear saddle-point problem} is a problem of the form
$
  \min_{\vec{x}\in \cX} \max_{\vec{y} \in \cY} \big\{ \vec{x}^{\!\top}\! \mat{A} \vec{y}\big\}, 
$%
where $\cX,\cY$ are closed convex sets.
This general formulation allows us to capture, among other settings, several game-theoretical applications such as computing Nash equilibria in two-player zero-sum games. In that setting, $\cX$ and $\cY$ are convex polytopes whose description is provided by the \emph{sequence-form constraints}, and $\mat{A}$ is a real payoff matrix~\citep{Stengel96:Efficient}.

The error metric that we use is the  \emph{saddle-point
  residual} (or \emph{gap}) $\xi$ of $(\bar{\vec{x}}, \bar{\vec{y}})$, defined as
$
  \xi(\bar{\vec{x}}, \bar{\vec{y}}) \defeq \max_{\hat{y}\in\cY} \langle \bar{\vec{x}}, \mat{A} \hat{\vec{y}}\rangle - \min_{\hat{\vec{x}}\in\cX} \langle \hat{\vec{x}}, \mat{A} \bar{\vec{y}} \rangle.
$
%
A well-known folk theorem shows that the average of a sequence of regret-minimizing strategies for the choice of losses
$
  \vec{\ell}^t_\cX : \cX \ni \vec{x} \mapsto  (-\mat{A}\vec{y}^{t})^{\!\top} x,\
  \vec{\ell}^t_\cY : \cY \ni \vec{y} \mapsto  (\mat{A}^\top \vec{x}^{t})^{\!\top} \vec{y}
$
leads to a bounded saddle-point residual, since
one has
 \begin{equation}\label{eq:gap regr sum}
   \xi(\bar{\vec{x}}, \bar{\vec{y}}) = \frac{1}{T}(R^T_\cX + R^T_\cY).
 \end{equation}

When $\cX,\cY$ are the players' sequence-form strategy spaces, 
this implies that
 the average strategy profile produced by the regret minimizers is a $\nicefrac{1}{T}(R^T_\cX + R^T_\cY)$-Nash equilibrium. This also implies that by using online mirror descent or follow-the-regularizer-leader, one obtains an anytime algorithm for computing a Nash equilibrium. In particular, at each time $T$, the average strategy output by each of the two regret minimizers forms a $\epsilon$-Nash equilibrium, where $\epsilon = O(T^{-0.5})$.

\subsection{RVU Property and Fast Convergence to Saddle Points}\label{sec:rvu}
Both optimistic OMD and optimistic FTRL satisfy the \emph{Regret bounded
by Variation in Utilities} (RVU) property, as given by~\citeauthor{Syrgkanis15:Fast}:

\begin{definition}[RVU property, \citep{Syrgkanis15:Fast}]\label{def:rvu}
     We say that a regret minimizer satisfies the RVU property if there exist constants $\alpha > 0$ and $0 < \beta \le \gamma$, as well as a pair of dual norms $(\|\cdot\|, \|\cdot\|_\ast)$ such that,     no matter what the loss functions $\vec{\ell}^1, \dots, \vec{\ell}^{T}$ are,
    \begin{equation}\label{eq:rvu}
    R^T \le \alpha + \beta \sum_{t=1}^T \|\vec{\ell}^t - \vec{m}^t\|_\ast^2 - \gamma \sum_{t=1}^T \|\vec{x}^t - \vec{x}^{t-1}\|^2. \tag{RVU}
    \end{equation}
\end{definition}

The definition given here is slightly more general than that of \citet{Syrgkanis15:Fast}: we allow a general estimate $\vec{m}^t$ of $\vec{\ell}^t$, whereas their definition requires using $\vec{m}^t = \vec{\ell}^{t-1}$. While the choice $\vec{m}^t = \vec{\ell}^{t-1}$ is often reasonable, in some cases other definitions of the loss prediction are more natural~\citep{Farina19:Stable}.
In practice, both optimistic OMD and optimistic FTRL satisfy a parametric notion of the RVU property, which depends on the value of the step-size parameter that was chosen to set up either algorithm.

\begin{theorem}[\citet{Syrgkanis15:Fast}]
  For all step-size parameters $\eta > 0$, Optimistic OMD satisfies the RVU conditions with respect to the primal-dual norm pair $(\|\cdot\|_1, \|\cdot\|_\infty)$ with parameters $\alpha = R/\eta, \beta = \eta, \gamma = 1/(8\eta)$, where $R$ is a constant that scales with the maximum allowed norm of any loss function $\ell$.
\end{theorem}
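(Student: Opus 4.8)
The plan is to run the standard Bregman‑proximal (online mirror descent) regret analysis on the two interleaved updates, taking care not to discard any of the ``movement'' terms. Fix the initialization $\vec{z}^0 \defeq \argmin_{\vec{x}\in\cX} d(\vec{x})$ and set $\vec{x}^0 \defeq \vec{z}^0$. Two elementary facts drive the argument: \emph{(i)} the variational characterization of a proximal step --- if $\vec{p}=\argmin_{\vec{x}\in\cX}\{\langle\vec{g},\vec{x}\rangle+\frac1\eta D(\vec{x}\dmid\vec{c})\}$, then by first‑order optimality together with the three‑point identity for $D$ one has $\langle\vec{g},\vec{p}-\vec{u}\rangle\le\frac1\eta\bigl(D(\vec{u}\dmid\vec{c})-D(\vec{u}\dmid\vec{p})-D(\vec{p}\dmid\vec{c})\bigr)$ for every $\vec{u}\in\cX$; and \emph{(ii)} $1$‑strong convexity of $d$, which yields $D(\vec{a}\dmid\vec{b})\ge\tfrac12\|\vec{a}-\vec{b}\|^2$. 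Taking $d$ to be $1$‑strongly convex with respect to $\|\cdot\|_1$ (e.g.\ a suitably rescaled dilated entropy) makes $(\|\cdot\|_1,\|\cdot\|_\infty)$ the relevant dual pair; the rest of the proof is norm‑agnostic.

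First I would apply \emph{(i)} to the $\vec{z}$‑update $\vec{z}^t=\argmin_{\vec{z}}\{\langle\vec{\ell}^t,\vec{z}\rangle+\frac1\eta D(\vec{z}\dmid\vec{z}^{t-1})\}$ with comparator $\vec{u}=\hat{\vec{x}}$ (the regret comparator), and to the $\vec{x}$‑update $\vec{x}^t=\argmin_{\vec{x}}\{\langle\vec{m}^t,\vec{x}\rangle+\frac1\eta D(\vec{x}\dmid\vec{z}^{t-1})\}$ with comparator $\vec{u}=\vec{z}^t$. Adding the two resulting inequalities, the $D(\vec{z}^t\dmid\vec{z}^{t-1})$ terms cancel, and using the split $\langle\vec{\ell}^t,\vec{x}^t-\hat{\vec{x}}\rangle=\langle\vec{\ell}^t-\vec{m}^t,\vec{x}^t-\vec{z}^t\rangle+\langle\vec{m}^t,\vec{x}^t-\vec{z}^t\rangle+\langle\vec{\ell}^t,\vec{z}^t-\hat{\vec{x}}\rangle$ one obtains the per‑round bound
\[
  \langle\vec{\ell}^t,\vec{x}^t-\hat{\vec{x}}\rangle \;\le\; \frac1\eta\bigl(D(\hat{\vec{x}}\dmid\vec{z}^{t-1})-D(\hat{\vec{x}}\dmid\vec{z}^{t})\bigr) + \langle\vec{\ell}^t-\vec{m}^t,\vec{x}^t-\vec{z}^t\rangle - \frac1\eta D(\vec{z}^t\dmid\vec{x}^t) - \frac1\eta D(\vec{x}^t\dmid\vec{z}^{t-1}).
\]
Summing over $t=1,\dots,T$, the first bracket telescopes and is at most $\frac1\eta D(\hat{\vec{x}}\dmid\vec{z}^0)\le R/\eta$, where $R\defeq\max_{\vec{u}\in\cX}D(\vec{u}\dmid\vec{z}^0)$ is the Bregman diameter of $\cX$; this is the constant $\alpha=R/\eta$.

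It remains to control $\sum_{t=1}^T\bigl[\langle\vec{\ell}^t-\vec{m}^t,\vec{x}^t-\vec{z}^t\rangle-\frac1\eta D(\vec{z}^t\dmid\vec{x}^t)-\frac1\eta D(\vec{x}^t\dmid\vec{z}^{t-1})\bigr]$. Using \emph{(ii)} to lower‑bound the divergences by $\tfrac1{2\eta}\|\vec{x}^t-\vec{z}^t\|^2$ and $\tfrac1{2\eta}\|\vec{x}^t-\vec{z}^{t-1}\|^2$, and Young's inequality $\langle\vec{\ell}^t-\vec{m}^t,\vec{x}^t-\vec{z}^t\rangle\le\eta\|\vec{\ell}^t-\vec{m}^t\|_\ast^2+\tfrac1{4\eta}\|\vec{x}^t-\vec{z}^t\|^2$, half of the first divergence absorbs the Young remainder, leaving per round $\eta\|\vec{\ell}^t-\vec{m}^t\|_\ast^2-\tfrac1{4\eta}\|\vec{x}^t-\vec{z}^t\|^2-\tfrac1{2\eta}\|\vec{x}^t-\vec{z}^{t-1}\|^2$; this gives $\beta=\eta$. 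For $\gamma$, apply the triangle inequality through $\vec{z}^{t-1}$: $\|\vec{x}^t-\vec{x}^{t-1}\|^2\le 2\|\vec{x}^t-\vec{z}^{t-1}\|^2+2\|\vec{z}^{t-1}-\vec{x}^{t-1}\|^2$. Summing over $t$ and reindexing the second term (using $\vec{x}^0=\vec{z}^0$) yields $\frac1{8\eta}\sum_t\|\vec{x}^t-\vec{x}^{t-1}\|^2\le\frac1{4\eta}\sum_t\|\vec{x}^t-\vec{z}^{t-1}\|^2+\frac1{4\eta}\sum_t\|\vec{x}^t-\vec{z}^t\|^2$, which is dominated termwise by the negative mass $\frac1{2\eta}\sum_t\|\vec{x}^t-\vec{z}^{t-1}\|^2+\frac1{4\eta}\sum_t\|\vec{x}^t-\vec{z}^t\|^2$ already available. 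Subtracting $\frac1{8\eta}\sum_t\|\vec{x}^t-\vec{x}^{t-1}\|^2$ then gives $\gamma=\frac1{8\eta}$ and completes the proof.

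The only genuinely delicate point is this final accounting: the negative Bregman mass must be split so that enough cancels the Young remainder while the rest still pays for $\|\vec{x}^t-\vec{x}^{t-1}\|^2$ after the detour through $\vec{z}^{t-1}$. The factor‑$2$ losses from squaring the triangle inequality, together with the one‑step index shift that charges $\|\vec{z}^{t-1}-\vec{x}^{t-1}\|^2$ against the $\|\vec{x}^t-\vec{z}^t\|^2$ budget, are precisely what bring the coefficient down from $\tfrac1{2\eta}$ to $\tfrac1{8\eta}$; a looser bookkeeping gives a worse $\gamma$. Everything else --- the prox lemma, telescoping, Young's inequality, and strong convexity --- is routine.
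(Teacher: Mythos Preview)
Your argument is correct and is essentially the standard proof of the RVU bound for optimistic OMD due to \citet{Rakhlin13:Online} and \citet{Syrgkanis15:Fast}: combine the three-point prox lemma for the $\vec{z}$-step (comparator $\hat{\vec{x}}$) and the $\vec{x}$-step (comparator $\vec{z}^t$), telescope, absorb the cross term with Young's inequality, and convert the leftover negative Bregman mass into $-\tfrac{1}{8\eta}\sum_t\|\vec{x}^t-\vec{x}^{t-1}\|^2$ via the triangle-through-$\vec{z}^{t-1}$ detour and an index shift. The bookkeeping you flag as ``delicate'' is exactly right, and your split of $-\tfrac{1}{2\eta}\|\vec{x}^t-\vec{z}^t\|^2$ into $-\tfrac{1}{4\eta}$ (to kill the Young remainder) and $-\tfrac{1}{4\eta}$ (to feed the shifted sum) is the clean way to obtain $\gamma=1/(8\eta)$.

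Note, however, that the paper does \emph{not} supply its own proof of this theorem; it is simply quoted from \citet{Syrgkanis15:Fast}, and the appendix only proves the OFTRL analogue (Theorem~\ref{thm:oftrl}). So there is no paper-side argument to compare against beyond the cited reference, with which your proof agrees. One small discrepancy worth flagging: in your derivation the constant $R$ emerges as the Bregman radius $R=\max_{\vec{u}\in\cX}D(\vec{u}\dmid\vec{z}^0)$, which is the standard and correct characterization; the theorem statement's description of $R$ as ``a constant that scales with the maximum allowed norm of any loss function'' is imprecise (and does not match what your proof, or the original reference, actually yields).
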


\begin{restatable}{theorem}{thmoftrl}\label{thm:oftrl}
  For all step-size parameters $\eta > 0$, OFTRL satisfies the RVU conditions with respect to any primal-dual norm pair $(\|\cdot\|, \|\cdot\|_\ast)$ with parameters $\alpha = \Delta_d/\eta, \beta = \eta, \gamma = 1/(4\eta)$, where $\Delta_d \defeq \max_{\vec{x},\vec{y}\in\cX} \{d(\vec{x}) - d(\vec{y})\}$.
\end{restatable}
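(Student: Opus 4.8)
The idea is to reduce optimistic FTRL to \emph{ordinary} FTRL run on a shifted loss sequence, apply a strengthened FTRL regret bound that already carries the negative stability term, and then undo the shift. First I would note that the OFTRL iterate $\vec{x}^{t+1}$ of \eqref{eq:oftrl} equals the ordinary FTRL iterate \eqref{eq:ftrl} for the modified losses $\hat{\vec{\ell}}^{\tau} \defeq \vec{\ell}^{\tau} + \vec{m}^{\tau+1} - \vec{m}^{\tau}$, since $\sum_{\tau\le t}\hat{\vec{\ell}}^{\tau} = \sum_{\tau\le t}\vec{\ell}^{\tau} + \vec{m}^{t+1}$ by telescoping (reading $\vec{m}^1$ as a phantom round-$0$ loss and putting $\vec{m}^{T+1}\defeq\vec{0}$ for the analysis, so $\vec{x}^1 = \argmin_{\vec{x}\in\cX}d(\vec{x})$; these conventions only shuffle boundary terms). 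So it suffices to bound FTRL's regret on $\hat{\vec{\ell}}^1,\dots,\hat{\vec{\ell}}^T$ and translate back.

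The step I expect to carry the weight is a strengthened FTRL regret bound: writing $\Phi_t(\vec{x}) \defeq \langle\sum_{\tau\le t}\hat{\vec{\ell}}^{\tau},\vec{x}\rangle + \tfrac1\eta d(\vec{x})$ and $\vec{x}^{t+1}=\argmin_{\vec{x}\in\cX}\Phi_t(\vec{x})$, one should get
\begin{equation*}
  \sum_{t=1}^{T}\langle\hat{\vec{\ell}}^t,\vec{x}^t-\hat{\vec{x}}\rangle \;\le\; \frac{\Delta_d}{\eta} + \sum_{t=1}^{T}\Big[\langle\hat{\vec{\ell}}^t,\vec{x}^t-\vec{x}^{t+1}\rangle - \tfrac{1}{2\eta}\|\vec{x}^t-\vec{x}^{t+1}\|^2\Big].
\end{equation*}
I would prove this by a ``be-the-leader'' telescoping of the value functions $V_t\defeq\min_{\vec{x}\in\cX}\Phi_t(\vec{x})$: expanding $\Phi_{t-1}(\vec{x}^{t+1})$ around the minimizer $\vec{x}^t$ of $\Phi_{t-1}$, using first-order optimality of $\vec{x}^t$ on $\cX$, and using that $\Phi_{t-1}$ has the same Bregman divergence as $\tfrac1\eta d$ (which dominates $\tfrac1{2\eta}\|\cdot\|^2$ since $d$ is $1$-strongly convex w.r.t.\ $\|\cdot\|$), one obtains $V_t-V_{t-1}\ge\langle\hat{\vec{\ell}}^t,\vec{x}^{t+1}\rangle+\tfrac1{2\eta}\|\vec{x}^{t+1}-\vec{x}^t\|^2$; summing, then using $V_T\le\Phi_T(\hat{\vec{x}})$ and $V_0=\tfrac1\eta\min_{\vec{x}\in\cX}d(\vec{x})$, and finally adding $\sum_t\langle\hat{\vec{\ell}}^t,\vec{x}^t-\vec{x}^{t+1}\rangle$ to both sides, gives the display. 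The subtlety is that the plain be-the-leader argument discards the $\tfrac1{2\eta}\|\cdot\|^2$ slack; recovering it requires expanding the Bregman term in the correct argument order.

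Next I would undo the loss shift. Since $\hat{\vec{\ell}}^t-\vec{\ell}^t=\vec{m}^{t+1}-\vec{m}^t$, both $\sum_t\langle\hat{\vec{\ell}}^t,\vec{x}^t-\hat{\vec{x}}\rangle$ and $\sum_t\langle\hat{\vec{\ell}}^t,\vec{x}^t-\vec{x}^{t+1}\rangle$ differ from their $\vec{\ell}$-versions by sums of the form $\sum_t\langle\vec{m}^{t+1}-\vec{m}^t,\cdot\rangle$; a summation by parts, using the boundary conventions above, makes these $\vec{m}$-contributions cancel, leaving
\begin{equation*}
  R^T \;\le\; \frac{\Delta_d}{\eta} + \sum_{t=1}^{T}\Big[\langle\vec{\ell}^t-\vec{m}^t,\vec{x}^t-\vec{x}^{t+1}\rangle - \tfrac{1}{2\eta}\|\vec{x}^t-\vec{x}^{t+1}\|^2\Big].
\end{equation*}
Finally, Cauchy--Schwarz and Young's inequality $ab\le\eta a^2+\tfrac1{4\eta}b^2$ give $\langle\vec{\ell}^t-\vec{m}^t,\vec{x}^t-\vec{x}^{t+1}\rangle\le\eta\|\vec{\ell}^t-\vec{m}^t\|_\ast^2+\tfrac1{4\eta}\|\vec{x}^t-\vec{x}^{t+1}\|^2$, so the positive quadratic eats half of the $-\tfrac1{2\eta}\|\cdot\|^2$ term; re-indexing $\sum_{t=1}^T\|\vec{x}^t-\vec{x}^{t+1}\|^2\ge\sum_{t=1}^T\|\vec{x}^{t-1}-\vec{x}^t\|^2$ (with $\vec{x}^0\defeq\vec{x}^1$) yields exactly \eqref{eq:rvu} with $\alpha=\Delta_d/\eta$, $\beta=\eta$, $\gamma=1/(4\eta)$. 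The argument uses only $1$-strong convexity of $d$ w.r.t.\ $\|\cdot\|$, so it holds for whichever primal--dual pair the chosen DGF is strongly convex for; this reproduces the OFTRL analysis of \citet{Syrgkanis15:Fast}, the only change being that an arbitrary prediction $\vec{m}^t$ is allowed in place of $\vec{m}^t=\vec{\ell}^{t-1}$.
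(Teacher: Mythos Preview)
Your argument is correct and, once the boundary bookkeeping is written out, it goes through cleanly: the be-the-leader inequality $\Phi_t(\vec{x}^{t+1})\ge\Phi_t(\vec{x}^t)+\tfrac{1}{2\eta}\|\vec{x}^{t+1}-\vec{x}^t\|^2$ applied to the OFTRL potentials $\Phi_t(\vec{x})=\langle \vec{L}^{t-1}+\vec{m}^t,\vec{x}\rangle+\tfrac1\eta d(\vec{x})$ telescopes exactly as you describe, and the $\langle\vec{m}^1,\cdot\rangle$ boundary term that appears on both sides in fact cancels without any assumption on $\vec{m}^1$ (so the ``phantom round-$0$'' device is unnecessary; only the harmless convention $\vec{m}^{T+1}:=\vec{0}$, which merely fixes the auxiliary point $\vec{x}^{T+1}=\tilde{x}(\vec{L}^T)$, is needed). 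This is, however, a genuinely different route from the paper's. The paper introduces the \emph{omniscient} sequence $\vec{x}_o^t\defeq\tilde{x}(\vec{L}^t)$ and splits $R^T=R_o^T+\sum_t\langle\vec{\ell}^t-\vec{m}^t,\vec{x}^t-\vec{x}_o^t\rangle+\sum_t\langle\vec{m}^t,\vec{x}^t-\vec{x}_o^t\rangle$; the first piece is bounded by a separate FTRL-style lemma for the omniscient iterates, the second by the $\eta$-Lipschitz continuity of the argmin map, and the third by a dedicated technical lemma that \emph{again} uses strong convexity of the potentials and eventually yields the $-\tfrac{1}{4\eta}\sum_t\|\vec{x}^{t+1}-\vec{x}^t\|^2$ stability term via a parallelogram inequality. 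Your approach collapses all of this into a single pass: one application of strong convexity inside the be-the-leader telescoping, followed by Abel summation on the $\vec{m}$-terms and Young's inequality. What you gain is brevity and the absence of an auxiliary iterate sequence; what the paper's decomposition buys is a conceptually appealing picture of OFTRL regret as ``omniscient regret plus a prediction-error correction'', together with the Lipschitz lemma for $\tilde{x}$ as a standalone tool. Both arrive at exactly the same constants $\alpha=\Delta_d/\eta$, $\beta=\eta$, $\gamma=1/(4\eta)$.
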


Our proof, available in the appendix of the full paper, generalizes the work by~\citet{Syrgkanis15:Fast} by extending the proof beyond simplex domains and beyond the fixed choice $\vec{m}^t=\vec{\ell}^{t-1}$.

It turns out that this is enough to accelerate the convergence to a saddle point in the construction of Section~\ref{sec:bspp}. In particular, by letting the predictions be defined as
$
  \vec{m}^t_\cX \defeq \vec{\ell}_\cX^{t-1},
  \vec{m}^t_\cY \defeq \vec{\ell}_\cY^{t-1},
$
we obtain that the residual $\xi$ of the average decisions $(\bar{\vec{x}}, \bar{\vec{y}})$ satisfies
\begin{align*}
  T\xi(\bar{\vec{x}}, \bar{\vec{y}}) &\le \frac{2\alpha'}{\eta} + \eta\sum_{t=1}^T \bigg(\|{-\mat{A}}\vec{y}^t + \mat{A}\vec{y}^{t-1}\|_\ast^2 + \|\mat{A}^{\!\top}\!\vec{x}^t - \mat{A}^{\!\top}\! \vec{x}^{t-1}\|_\ast^2 \bigg) \\[-2mm]
  &\hspace{5cm}-\frac{\gamma\,'}{\eta}\sum_{t=1}^T \bigg(\|\vec{x}^t - \vec{x}^{t-1}\|^2
  + \|\vec{y}^t - \vec{y}^{t-1}\|^2 \bigg)\\[.5mm]
        &\leq  \frac{2\alpha'}{\eta} + \left(\eta \|\mat{A}\|_\text{op}^2 - \frac{\gamma\,'}{\eta} \right) \left(\sum_{t=1}^T \|\vec{x}^t - \vec{x}^{t-1}\|^2 + \sum_{t=1}^T \|\vec{y}^t - \vec{y}^{t-1}\|^2\right),
\end{align*}
where the first inequality holds by plugging~\eqref{eq:rvu} into~\eqref{eq:gap regr sum}, and the second inequality by noting that the operator norm $\|\cdot\|_\text{op}$ of a linear function is equal to the operator norm of its transpose. This implies that when the step-size parameter is chosen as $\eta = \frac{\sqrt{{\gamma\,'}}}{\|\mat{A}\|_\text{op}}$,
the saddle-point gap $\xi(\bar{\vec{x}}, \bar{\vec{y}})$ satisfies
$
  \xi(\bar{\vec{x}}, \bar{\vec{y}}) \le \frac{2 \alpha' \|\mat{A}\|_\text{op}}{T\sqrt{\gamma\,'}} = O(T^{-1}).
$


\section{Treeplexes and Sequence Form}\label{sec:treeplexes}



We formalize a sequential decision process as follows. We assume that we have a set of decision points $\convsets$. Each
decision point $j\in \convsets$ has a set of actions $A_j$ of size $n_j$. Given a
specific action at $j$, the set of possible decision points that the agent may
next face is denoted by $\childinfosets{j,a}$. It can be an empty set if no
more actions are taken after $j,a$. We assume that the decision points form a
tree, that is, $\childinfosets{j,a} \cap \childinfosets{j',a'} = \emptyset$ for all
other convex sets and action choices $j',a'$. This condition is equivalent to
the perfect-recall assumption in extensive-form games, and to conditioning on
the full sequence of actions and observations in a finite-horizon
partially-observable decision process.
In our
definition, the decision space starts with a root decision point, whereas in
practice multiple root decision points may be needed, for example in order to
model different starting hands in card games. Multiple root decision points can
be modeled by having a dummy root decision point with only a single action.

The set of possible next decision points after choosing action $a\in A_j$ at
decision point $j \in \convsets$, denoted $\childinfosets{j,a}$, can be
thought of as representing the different decision points that an agent may
face after taking action $a$ and then making an observation on which she can condition her next action choice.
In addition to games, our model of sequential decision process captures, for example, partially-observable Markov decision processes and Markov decision processes where we condition on the entire history of
observations and actions.


\begin{wrapfigure}{l}{6.5cm}
  \centering\includegraphics[scale=.6]{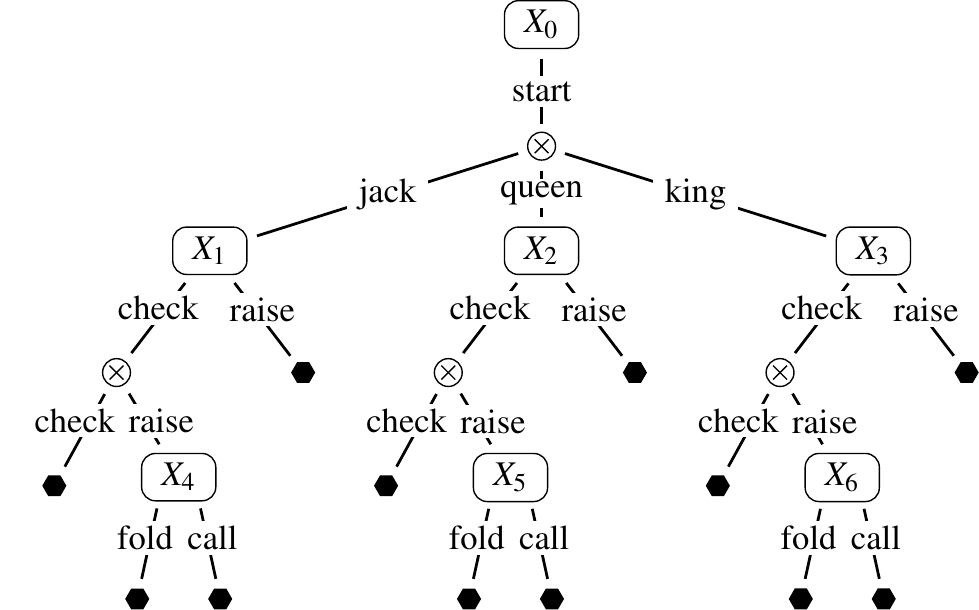}
  \caption{Sequential action space for the first player in the game of Kuhn poker. \treeplexproduct{} denotes an observation point; \protect\tikz{\protect\node[draw, circle, regular polygon, regular polygon sides=6, shape border rotate=180, inner sep=.7mm, fill=black] {}} represents the end of the decision process.}
  \label{fig:kuhn treeplex player1}
  \vspace{-2mm}
\end{wrapfigure}

As an illustration, consider the game of Kuhn poker~\citep{Kuhn50:Simplified}.
Kuhn poker consists of a three-card deck: king, queen, and jack.
The action space for the first player is shown in Figure~\ref{fig:kuhn treeplex
  player1}. For instance, we have: $\convsets = \{0,1,2,3,4,5,6\}$; $n_0 = 1$; $n_j = 2$ for all $j\in\convsets
\setminus \{0\}$; $A_0 = \{\text{start}\}$, $A_1 = A_2 = A_3 = \{\text{check},
\text{raise}\}$, $A_4 = A_5 = A_6 = \{\text{fold}, \text{call}\}$;
$\childinfosets{0,\text{start}} = \{1, 2, 3\}$, $\childinfosets{1, \text{raise}}
= \emptyset$, $\childinfosets{3,\text{check}} = \{6\}$; etc.

The expected loss for a given strategy is non-linear in the vectors of probability masses for each decision point $j$. This non-linearity is due to the probability of reaching each $j$, which is computed as the product of the probabilities
of all actions on the path to from the root to $j$. 
%
An alternative formulation which preserves linearity is called the \emph{sequence form}. In the sequence-form
representation, the simplex strategy space at a generic decision point $j\in \cJ$ is scaled by the decision variable associated with the last action in the path from the root of the process to $j$. In this formulation, the value of a particular action
represents the probability of playing the whole \emph{sequence} of actions from
the root to that action. This allows each term in the expected loss to be
weighted only by the sequence ending in the corresponding action. The sequence
form has been used to instantiate linear programming~\citep{Stengel96:Efficient}
and first-order methods~\citep{Hoda10:Smoothing,Kroer15:Faster,Kroer18:Faster}
for computing Nash equilibria of zero-sum EFGs.
Formally, the sequence-form representation $\cX $ of a sequential decision process can be obtained recursively, as follows:
for every $j\in \cJ$, $a \in A_j$, we let
    $
      \seqf{j,a} \defeq \prod_{j'\in\childinfosets{j,a}} \seqf{j'},
    $
    where $\Pi$ denotes Cartesian product;
  at every decision point $j \in \cJ$, we let
    \[
        \seqf{j} \defeq \{(
                                   \lambda_1 ,
                                   \dots ,
                                   \lambda_{n_j},
                                   \lambda_1 \vec{x}_{a_1},
                                   \dots,
                            \lambda_{n_j}\vec{x}_{a_{n_j}}): (\lambda_1, \dots, \lambda_n) \in \Delta^{n_j},\vec{x}_{a} \in \seqf{j,a} \ \forall\, a\in A_j \},\]
where we assumed $A_j = \{a_1, \dots, a_{n_j}\}$.

The sequence form strategy space for the whole sequential decision process is then $\cX \defeq \{1\} \times \seqf{r}$, where $r$ is the root of the process. The first entry, identically equal to 1 for any point in $\cX$, corresponds to what is called the \emph{empty sequence}. Crucially, $\cX $ is a convex and compact set, and the expected loss of the process is a linear function over $\cX $.
With the sequence-form representation the problem of computing a Nash
equilibrium in an EFG can be formulated as a \emph{bilinear saddle-point
  problem} (see Section~\ref{sec:bspp}), where $\cX$ and $\cY$ are the sequence-form strategy spaces of the sequential decision processes faced by the two players,
and $\mat{A}$ is a sparse matrix encoding the leaf payoffs of the game.

As we have already observed, vectors that pertain to the sequence form have one entry for each sequence of the decision process. We denote with $v_\phi$ the entry in $\vec{v}$ corresponding to the empty sequence, and $v_{ja}$ the entry corresponding to any other sequence $(j, a)$ where $j\in\cJ, a \in A_j$. Sometimes, we will need to \emph{slice} a vector $\vec{v}$ and isolate only those entries that refer to all decision points $j'$ and actions $a' \in A_{j'}$ that are at or below some $j\in \cJ$; we will denote such operation as $\vec{v}_{\downarrow j}$. Similarly, we introduce the syntax $v_j$ to denote the subset of $n_j = |A_j|$ entries of $\vec{v}$ that pertain to all actions $a \in A_j$ at decision point $j\in\cJ$.
Finally, note that for any $j\in\cJ-\{r\}$ there is a unique sequence $(j', a')$, denoted $p_j$ and called \emph{the parent sequence of decision point $j$}, such that $j \in \childinfosets{j'a'}$. When $j = r$ is the root decision point, we let $p_r \defeq \phi$, the empty sequence.

\section{Dilated Distance Generating Functions}

We will be interested in a particular type of DGF which is suitable for
sequential decision-making problems: a \emph{dilated DGF}. A dilated DGF is
constructed by taking a sum over suitable local DGFs for each decision point,
where each local DGF is dilated by the parent variable leading to the
decision point:
$
  d(\vec{x}) = \sum_{j \in \convsets} x_{p_j}\
d_j\!\bigg(\frac{\vec{x}_j}{x_{p_j}}\bigg).
$
Each ``local'' DGF $d_j$ is given the local variable $\vec{x}_j$ divided by $x_{p_j}$, so that $\frac{\vec{x}_j}{x_{p_j}} \in \symp{n_j}$. The idea is that $d_j$ can be any DGF suitable for $\symp{n_j}$; by multiplying $d_j$ by $x_{p_j}$ and taking a sum over $\convsets$ we construct a DGF for the whole treeplex from these local DGFs.
\citet{Hoda10:Smoothing} showed that dilated DGFs have many of the desired
properties of a DGF for an optimization problem over a treeplex.

We now present two local DGFs for simplexes, that are by far the most common in practice. In the following we let $\vec{b}$ be a vector in the $n$-dimensional simplex $\symp{n}$. First, the \emph{Euclidean DGF} $d(\vec{b}) = \|\vec{b}\|_2^2$, which is 1-strongly convex with respect to the $\ell_2$ norm; secondly, the \emph{negative entropy DGF} $d(\vec{b}) = \sum_{i=1}^n b_i \log(b_i)$ (we will henceforth drop the ``negative'' and simply refer to it as the entropy DGF), which is 1-strongly convex with respect to the $\ell_1$ norm.
%
The strong convexity properties of the dilated entropy DGF were shown by \citet{Kroer18:Faster} (with earlier weaker results shown by \citet{Kroer15:Faster}). However, for the dilated Euclidean DGF a setup for achieving a strong-convexity parameter of 1 was unknown until now; \citet{Hoda10:Smoothing} show that a strong-convexity parameter exists, but do not show what it is for the general case (they give specific results for a particular class of \emph{uniform treeplexes}). We now show how to achieve this.


We are now ready to state our first result on dilated regularizers that are strongly convex with respect to the Euclidean norm:

\begin{restatable}{theorem}{thmstronglconvexdgf}\label{thm:strongly convex dgf}
    Let $d(\vec{x}) = \sum_{j\in\cJ} x_{p_j} d_j(\vec{x}_j/x_{p_j})$ where for all $j$, $d_j$ is $\mu_j$-strongly convex with respect to the Euclidean norm over $\Delta^{n_j}$. Furthermore, define
$
  \sigma_{ja} := \frac{\mu_j}{2} - \sum_{j' \in C_{ja}} \mu_{j'},$ and $ \bar \sigma := \min_{ja} \sigma_{ja}.
$
Then, $d$ is $\bar\sigma$-strongly convex with respect to the Euclidean norm over $\cX$.
\end{restatable}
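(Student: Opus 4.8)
The plan is to use the Bregman-divergence characterization of strong convexity: $d$ is $\bar\sigma$-strongly convex with respect to $\|\cdot\|_2$ over $\cX$ exactly when $D(\vec x \dmid \vec y) \ge \tfrac{\bar\sigma}{2}\|\vec x - \vec y\|_2^2$ for all $\vec x,\vec y\in\cX$. The first step is to decompose $D$ along the tree. On $\cX$ one has $\sum_{a\in A_j} x_{ja} = x_{p_j}$, i.e. $\|\vec x_j\|_1 = x_{p_j}$, so the dilated summand $x_{p_j} d_j(\vec x_j / x_{p_j})$ is in fact a function of the block $\vec x_j$ alone, and a direct chain-rule computation of $\nabla d$ — in which the derivative contributions coming from the dilation variables telescope, using $\sum_a x_{ja}=x_{p_j}$ — yields
\[
  D(\vec x \dmid \vec y) \;=\; \sum_{j\in\cJ} x_{p_j}\, D_{d_j}\!\Big(\tfrac{\vec x_j}{x_{p_j}} \,\Big\dmid\, \tfrac{\vec y_j}{y_{p_j}}\Big),
\]
where $D_{d_j}$ is the Bregman divergence induced by $d_j$ on $\Delta^{n_j}$. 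Bounding $D_{d_j}(\vec p\dmid\vec q)\ge\tfrac{\mu_j}{2}\|\vec p-\vec q\|_2^2$ by $\mu_j$-strong convexity, and using $\|\vec x-\vec y\|_2^2 = \sum_{j\in\cJ}\|\vec x_j-\vec y_j\|_2^2$ (the empty-sequence coordinate is constant and every other coordinate lies in exactly one block), the claim reduces to the scalar-free inequality
\[
  \sum_{j\in\cJ} \frac{\mu_j}{x_{p_j}}\,\Big\|\vec x_j - \tfrac{x_{p_j}}{y_{p_j}}\,\vec y_j\Big\|_2^2 \;\ge\; \bar\sigma \sum_{j\in\cJ}\|\vec x_j - \vec y_j\|_2^2 ,
\]
which I would prove by induction over $\cJ$ running from the leaves toward the root. (We may assume $\bar\sigma\ge 0$, since $d$ is convex and hence trivially $\bar\sigma$-strongly convex otherwise; in particular $\sigma_{ja}\ge 0$, so $\mu_j/2\ge\sum_{j'\in C_{ja}}\mu_{j'}$ for all $j,a$.)

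The engine of the induction is the per-node expansion (writing $s=x_{p_j}$, $t=y_{p_j}$, so that $\|\vec x_j\|_2\le\|\vec x_j\|_1=s$ and $\|\vec y_j\|_2\le t$)
\[
  \Big\|\vec x_j - \tfrac{s}{t}\,\vec y_j\Big\|_2^2 \;=\; \|\vec x_j - \vec y_j\|_2^2 \;-\; \tfrac{s-t}{t}\Big(2\langle\vec x_j,\vec y_j\rangle - \tfrac{s+t}{t}\|\vec y_j\|_2^2\Big),
\]
which splits the contribution of decision point $j$ into a \emph{local} part $\|\vec x_j-\vec y_j\|_2^2$ and a \emph{scale-mismatch} part governed by $(s-t)^2=(x_{p_j}-y_{p_j})^2$. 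The key structural observation is that $x_{p_j}-y_{p_j}=x_{j^\star a^\star}-y_{j^\star a^\star}$ for the parent sequence $p_j=(j^\star,a^\star)$, so the scale-mismatch deficit produced at $j$ is charged against one of the summands of $\|\vec x_{j^\star}-\vec y_{j^\star}\|_2^2$, i.e. against the local part of the parent $j^\star$. The accounting is arranged so that at each decision point $j$ we retain $\tfrac{\mu_j}{2}$ units of strong convexity to provide the required $\bar\sigma$-fraction of its own local term (recall $\bar\sigma\le\sigma_{ja}\le\mu_j/2$), while the remaining $\sum_{j'\in C_{ja}}\mu_{j'}$ units are routed down action $a$ and spent exactly on absorbing the scale-mismatch deficits of the children in $C_{ja}$; the modulus surviving for action $a$ is precisely $\sigma_{ja}=\tfrac{\mu_j}{2}-\sum_{j'\in C_{ja}}\mu_{j'}$, and $\bar\sigma=\min_{ja}\sigma_{ja}$ drops out after the minimum over all $j,a$. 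Concretely the induction would maintain, for each $j$, that the sub-sum $\sum_{j'\succeq j}\tfrac{\mu_{j'}}{x_{p_{j'}}}\|\vec x_{j'}-\tfrac{x_{p_{j'}}}{y_{p_{j'}}}\vec y_{j'}\|_2^2$ dominates $\bar\sigma$ times the sum of the local terms over the subtree rooted at $j$, plus a surplus term depending only on $\vec x_j,\vec y_j$ and on $s-t$ that $j$ hands to its parent.

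The step I expect to be the real obstacle is the control of the scale-mismatch terms — that is, making the bookkeeping above close with the \emph{clean} constants $\sigma_{ja}$ rather than with something weaker. When $x_{p_j}\ne y_{p_j}$ the quantity $x_{p_j} D_{d_j}(\vec x_j/x_{p_j}\dmid \vec y_j/y_{p_j})$ can be far smaller than $\tfrac{\mu_j}{2}\|\vec x_j-\vec y_j\|_2^2$ — it even vanishes when $\vec x_j$ and $\vec y_j$ are parallel — so a term-by-term comparison is hopeless and the deficit genuinely has to migrate up the tree. The delicate point is to perform this migration without acquiring factors of $1/x_{p_j}$, which would blow up on thin parts of the treeplex: the surplus passed to the parent must be expressible purely through $(x_{p_j}-y_{p_j})^2$, and here one uses $\|\vec y_j\|_2\le\|\vec y_j\|_1=y_{p_j}$ (together with boundedness of $\langle\vec x_j,\vec y_j\rangle$ and $x_{p_j}\le 1$) to cancel the offending $1/y_{p_j}^2$ in the expansion above. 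Getting every constant to land exactly at $\sigma_{ja}$ — as opposed to, say, $\tfrac{\mu_j}{2}-2\sum_{j'\in C_{ja}}\mu_{j'}$ or a bound carrying a residual $1/x_{p_j}$ — is what forces the precise choice of inductive invariant and is the one genuinely nontrivial part of the argument; once the invariant is correctly chosen, each inductive step is a mechanical application of the expansion and the Cauchy–Schwarz inequality.
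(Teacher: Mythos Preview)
Your proposal is correct and uses the same core ingredients as the paper's proof: the decomposition of the global Bregman divergence into local pieces $x_{p_j}D_{d_j}\big(\vec x_j/x_{p_j}\,\big\|\,\vec y_j/y_{p_j}\big)$, the local strong-convexity bound, the key norm estimate $\|\vec y_j/y_{p_j}\|_2\le 1$ on the simplex, and the regrouping that charges the scale-mismatch defect at $j$ against the parent coordinate $(x_{p_j}-y_{p_j})^2=(x_{j^\star a^\star}-y_{j^\star a^\star})^2$, producing exactly the coefficients $\sigma_{ja}$.

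The one difference worth noting is that the paper's execution is considerably more direct than your proposed induction. After arriving at the same intermediate quantity $\sum_j \mu_j\, x_{p_j}\,\big\|\vec x_j/x_{p_j}-\vec y_j/y_{p_j}\big\|_2^2$, the paper first uses $x_{p_j}\le 1$ to replace $x_{p_j}$ by $x_{p_j}^2$, and then applies a single parallelogram inequality,
\[
\tfrac12\|\vec x_j-\vec y_j\|_2^2
\;\le\;
x_{p_j}^2\Big\|\tfrac{\vec x_j}{x_{p_j}}-\tfrac{\vec y_j}{y_{p_j}}\Big\|_2^2
\;+\;(x_{p_j}-y_{p_j})^2\Big\|\tfrac{\vec y_j}{y_{p_j}}\Big\|_2^2
\;\le\;
x_{p_j}^2\Big\|\tfrac{\vec x_j}{x_{p_j}}-\tfrac{\vec y_j}{y_{p_j}}\Big\|_2^2
\;+\;(x_{p_j}-y_{p_j})^2,
\]
globally over all $j$. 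Summing and regrouping the $(x_{p_j}-y_{p_j})^2$ terms under their parents immediately yields $\sum_{j,a}\sigma_{ja}(x_{ja}-y_{ja})^2\ge\bar\sigma\|\vec x-\vec y\|_2^2$ in one line. No induction or surplus tracking is needed, and the worrisome $1/x_{p_j}$ factor never appears because the $x_{p_j}\mapsto x_{p_j}^2$ step absorbs it up front. Your approach would work, but this shortcut both simplifies the proof and makes it transparent why the constants land on $\sigma_{ja}$ exactly.
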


We can immediately use Theorem~\ref{thm:strongly convex dgf} to prove the following corollary:
\begin{corollary}\label{cor:dgf weights}
Let $\bar\sigma > 0$ be arbitrary, and for all $j$ let $d_j$ be a $\mu_j$-strongly convex function over $\symp{n_j}$ with respect to the Euclidean norm, where the $\mu_j$'s satisfy
\begin{equation}\label{eq:deja vu}
  \mu_j = 2\bar\sigma + 2\max_{a\in A_j}\sum_{j' \in \childinfosets{ja}} \mu_{j'}.
\end{equation}
Then, $d(\vec{x}) = \sum_{j\in\cJ} x_{p_j} d_j(\vec{x}_j/x_{p_j})$ is $\bar\sigma$-strongly convex over $\cX$ with respect to the Euclidean norm.
\end{corollary}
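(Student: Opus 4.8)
The plan is to obtain the corollary as an immediate consequence of Theorem~\ref{thm:strongly convex dgf}, by checking that the weights $\mu_j$ prescribed by~\eqref{eq:deja vu} force the quantity $\min_{ja}\sigma_{ja}$ appearing in that theorem to be at least the target constant $\bar\sigma$. First I would observe that the recursion~\eqref{eq:deja vu} is well posed: since the decision points form a tree, one can assign the $\mu_j$ in a single bottom-up pass, starting from those $j$ for which every $C_{ja}$ is empty (where $\mu_j = 2\bar\sigma$) and moving toward the root. A trivial induction along this pass shows every $\mu_j$ is finite and, because $\bar\sigma>0$ and all $\mu_{j'}\ge 0$, in fact $\mu_j \ge 2\bar\sigma > 0$. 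Hence for each $j$ a $\mu_j$-strongly convex local DGF on $\symp{n_j}$ exists (e.g.\ a suitable rescaling of the Euclidean DGF), so the hypotheses of Theorem~\ref{thm:strongly convex dgf} are satisfied.

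Next I would substitute~\eqref{eq:deja vu} into the definition of $\sigma_{ja}$ from Theorem~\ref{thm:strongly convex dgf}. Dividing~\eqref{eq:deja vu} by two gives $\tfrac{\mu_j}{2} = \bar\sigma + \max_{a'\in A_j}\sum_{j'\in C_{ja'}}\mu_{j'}$, so for every decision point $j$ and action $a\in A_j$,
\[
  \sigma_{ja} = \frac{\mu_j}{2} - \sum_{j'\in C_{ja}}\mu_{j'}
  = \bar\sigma + \Big(\max_{a'\in A_j}\sum_{j'\in C_{ja'}}\mu_{j'} - \sum_{j'\in C_{ja}}\mu_{j'}\Big) \ge \bar\sigma,
\]
since the parenthesized difference is nonnegative. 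Taking the minimum over all pairs $(j,a)$ yields $\min_{ja}\sigma_{ja}\ge\bar\sigma$.

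Finally I would invoke Theorem~\ref{thm:strongly convex dgf}, which gives that $d$ is $\big(\min_{ja}\sigma_{ja}\big)$-strongly convex with respect to the Euclidean norm over $\cX$; and since strong convexity with parameter $\mu$ implies strong convexity with any smaller parameter (because $f-\tfrac{\mu'}{2}\|\cdot\|_2^2 = (f-\tfrac{\mu}{2}\|\cdot\|_2^2) + \tfrac{\mu-\mu'}{2}\|\cdot\|_2^2$ is a sum of convex functions when $\mu'\le\mu$), the bound $\min_{ja}\sigma_{ja}\ge\bar\sigma$ lets us conclude that $d$ is $\bar\sigma$-strongly convex. I do not expect a genuine obstacle here: the argument is a direct substitution, and the only two points needing a word of care are the well-definedness of the recursion~\eqref{eq:deja vu} (handled by the tree structure, together with the observation that it forces $\mu_j>0$) and the mild notational clash between $\bar\sigma$ as the target constant and $\min_{ja}\sigma_{ja}$ as the constant delivered by the theorem (reconciled by the monotonicity of the strong-convexity parameter).
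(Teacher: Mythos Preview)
Your proposal is correct and matches the paper's intended argument: the paper treats the corollary as an immediate consequence of Theorem~\ref{thm:strongly convex dgf} without spelling out a proof, and your substitution showing $\sigma_{ja}\ge\bar\sigma$ for every $(j,a)$ is precisely the computation that makes this immediate. The extra remarks you include about the well-posedness of the recursion and the monotonicity of the strong-convexity parameter are sound and only add rigor beyond what the paper states explicitly.
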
 
\section{Local Regret Minimization}

We now show that OMD and Optimistic OMD run on a treeplex $\cX$ with a dilated DGF can both be interpreted as locally minimizing a modified variant of loss at each information set, with correspondingly-modified loss predictions. The modified local loss at a given information set $j$ takes into account the loss and DGF below $j$ by adding the expectation with respect to the next iterate $\xsubt{j}$. In practice this modified loss is easily handled by computing $\vec{x}^t$ bottom-up, thereby visiting $j$ after having visited the whole subtree below.

We first show that the problem of computing the \emph{prox mapping}, the minimizer of a linear term plus the Bregman divergence, decomposes into local prox mappings at each simplex of a treeplex. This will then be used to show that OMD and Optimistic OMD can be viewed as a tree of local simplex-instantiations of the respective algorithms.

\subsection{Decomposition into Local Prox Mappings with a Dilated DGF}
We will be interested in solving the following prox mapping, which takes
place in the sequence form:
\begin{align}
  \label{eq:sequence form prox}
  \prox(\vec{g},\hat{\vec{x}}) = \argmin_{\vec{x} \in \cX} \big\{ \langle \vec{g}, \vec{x} \rangle + D(\vec{x} \dmid \hat{\vec{x}}) \big\}.
\end{align}
The reason is that the update applied at each iteration of several OCO
algorithms run on the sequence-form polytope of $\cX$ can be described as an
instantiation of this prox mapping. We now show that this update can be
interpreted as a local prox mapping at each decision point, but with a new
loss $\hat g_j$ that depends on the update applied in the subtree beneath
$j$.

\begin{restatable}[Decomposition into local prox mappings]{proposition}{propdecompositionlocalprox}
  \label{prop:decomposition local prox}
  A prox mapping \eqref{eq:sequence form prox} on a treeplex with a
Bregman divergence constructed from a dilated DGF decomposes into local prox
mappings at each decision point $j$ where the solution is as follows:
\[
    \vec{x}^*_j = x_{p_j}\cdot \argmin_{\vec{b}_j \in \symp{n_j}} \left\{\left\langle \hat{\vec{g}}_{j}, \vec{b}_j\right\rangle + D_j\bigg(
\vec{b}_j\,
  \right\|\left.
\frac{\hat{\vec{x}}_j}{\xhat_{p_j}}\bigg)
  \right\},
\]
  where
\[
    \hat g_{j,a} = g_{j,a} + \sum_{j' \in \childinfosets{j,a}}
\left[
d_{\subt{j'}}^*\big(-\vec{g}_{\subt{j'}} + \nabla d_{\subt{j'}}(\hat{\vec{x}}_{\subt{j'}})\big) -
d_{j'}\bigg(\frac{\hat{\vec{x}}_j}{\xhat_{p_j}}\bigg) + \left\langle  \nabla
d_{j'}\!\!\left(\frac{\hat{\vec{x}}_{j'}}{\xhat_{p_{j'}}}\right), \frac{\hat{\vec{x}}_{j'}}{\xhat_{p_{j'}}} \right\rangle\right].
\]
\end{restatable}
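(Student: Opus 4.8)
Let me think about how to prove this decomposition result. We have a dilated DGF $d(\vec{x}) = \sum_{j\in\cJ} x_{p_j} d_j(\vec{x}_j/x_{p_j})$ on the treeplex $\cX$. The Bregman divergence is $D(\vec{x}\|\hat{\vec{x}}) = d(\vec{x}) - d(\hat{\vec{x}}) - \langle \nabla d(\hat{\vec{x}}), \vec{x} - \hat{\vec{x}}\rangle$. We want to solve $\prox(\vec{g},\hat{\vec{x}}) = \argmin_{\vec{x}\in\cX}\{\langle\vec{g},\vec{x}\rangle + D(\vec{x}\|\hat{\vec{x}})\}$.

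The plan is to proceed by induction on the tree structure, working bottom-up. The key observation is that the treeplex $\cX$ has the recursive structure $\seqf{j} = \{(\lambda_1,\dots,\lambda_{n_j}, \lambda_1 \vec{x}_{a_1},\dots,\lambda_{n_j}\vec{x}_{a_{n_j}}) : \vec\lambda\in\Delta^{n_j}, \vec{x}_a\in\seqf{j,a}\}$, and the dilated DGF, the linear term, and the Bregman divergence all decompose along this same recursive structure. Concretely, I would define, for each decision point $j$, a "value-to-go" function: the optimal value of the prox subproblem restricted to the subtree rooted at $j$, as a function of the incoming parent-sequence weight $x_{p_j}$. Because the local DGF term $x_{p_j} d_j(\vec{x}_j/x_{p_j})$ is the perspective transformation of $d_j$, it is positively homogeneous of degree 1 in $(x_{p_j},\vec{x}_j)$; hence the subtree-value function scales linearly in $x_{p_j}$, i.e. it equals $x_{p_j}$ times a constant that depends only on $\hat{\vec{x}}$ and $\vec{g}$ restricted to that subtree. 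This is exactly the quantity $d^*_{\subt{j}}(-\vec{g}_{\subt{j}} + \nabla d_{\subt{j}}(\hat{\vec{x}}_{\subt{j}}))$ appearing in the statement — the convex conjugate of the subtree-restricted DGF, evaluated at the "shifted" gradient coming from the Bregman linearization.

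The main steps, then: (i) expand $D(\vec{x}\|\hat{\vec{x}})$ using the dilated form of $d$ and collect terms decision-point by decision-point, keeping careful track of the gradient $\nabla d(\hat{\vec{x}})$, which by the chain/perspective rule has entries $[\nabla d(\hat{\vec{x}})]_{ja}$ built from $\nabla d_j(\hat{\vec{x}}_j/\hat x_{p_j})$ together with correction terms $d_{j'}(\cdot) - \langle\nabla d_{j'}(\cdot),\cdot\rangle$ for children $j'$ — this is where the somewhat ugly last two terms of $\hat g_{j,a}$ originate (they are precisely the pieces of the linearization of the parent's perspective term that live at action $(j,a)$). (ii) Substitute $\vec{x}_j = x_{p_j}\vec{b}_j$ with $\vec{b}_j\in\Delta^{n_j}$ and peel off the outermost minimization; use homogeneity to pull $x_{p_j}$ out of the subtree terms. (iii) Recognize the minimization over the subtree of $(j,a)$ as defining $x_{ja}\cdot d^*_{\subt{(j,a)}}(\cdots)$ via the definition of the conjugate $d^*_{\subt{j'}}(\vec{w}) = \max_{\vec{u}}\langle\vec{w},\vec{u}\rangle - d_{\subt{j'}}(\vec{u})$, noting that the inner feasible set is again a treeplex so the induction hypothesis applies. (iv) Assemble: the outer problem becomes $\min_{\vec{b}_j\in\Delta^{n_j}}\{\langle\hat{\vec{g}}_j,\vec{b}_j\rangle + D_j(\vec{b}_j\|\hat{\vec{x}}_j/\hat x_{p_j})\}$ scaled by $x_{p_j}$, with $\hat{\vec{g}}_j$ collecting $g_{j,a}$ plus the conjugate-value-to-go at each child minus the linearization corrections, exactly as claimed.

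The main obstacle I expect is the bookkeeping around the gradient of the dilated DGF and the Bregman divergence: because $d$ is a sum of perspective functions, $\nabla d(\hat{\vec{x}})$ mixes contributions from a decision point and from its parent, and one must be scrupulous about which terms get absorbed into the local Bregman divergence $D_j$ versus which become part of the modified loss $\hat{\vec{g}}_j$. Getting the "$- d_{j'}(\hat{\vec{x}}_j/\hat x_{p_j}) + \langle\nabla d_{j'}(\hat{\vec{x}}_{j'}/\hat x_{p_{j'}}),\, \hat{\vec{x}}_{j'}/\hat x_{p_{j'}}\rangle$" correction terms to land in the right place — and checking there is no double counting across levels — is the delicate part; everything else is a routine induction once the recursive scaling (degree-1 homogeneity of the perspective transform) is set up correctly. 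A secondary care point is handling the boundary of $\cX$ (where some $x_{p_j}=0$ and subtrees collapse) and the root convention $p_r=\phi$, $x_\phi\equiv 1$, but these are limiting cases that follow by continuity of the perspective function.
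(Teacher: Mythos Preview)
Your proposal is correct and follows essentially the same approach as the paper's proof: expand the Bregman divergence to a linear-plus-$d$ form, use the recursive treeplex structure and degree-1 homogeneity to factor the subtree problems as $x_{p_j}$ times a convex-conjugate value $d^*_{\subt{j'}}(-\vec{g}_{\subt{j'}}+\nabla d_{\subt{j'}}(\hat{\vec{x}}_{\subt{j'}}))$, compute $\nabla_{j,a} d(\hat{\vec{x}})$ via the perspective chain rule to produce the correction terms, and reassemble into a local simplex prox mapping with respect to $D_j$. The paper presents this slightly more tersely (working at the root and invoking induction) but the ingredients and their ordering are the same.
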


\citet{Hoda10:Smoothing} and \citet{Kroer18:Solving} gave variations on a
similar result: that the convex conjugate $d_{\subt{j}}^*(-\vec{g})$ can be
computed in bottom-up fashion similar to the recursion we show here.
Proposition~\ref{prop:decomposition local prox} is slightly different in that
we additionally show that the Bregman divergence also survives the
decomposition and can be viewed as a local Bregman divergence. This latter
difference will be necessary for showing that OMD can be interpreted as a
local RM.

\subsection{Decomposition into Local Regret Minimizers}
With Proposition~\ref{prop:decomposition local prox} it follows almost directly that OMD and Optimistic OMD can be seen as a set of local regret minimizers, one for each simplex. Each produces iterates from their respective simplex, with the overall strategy produced by then applying the sequence-form transformation to these local iterates.

\begin{restatable}{theorem}{thmomdseparable}
  OMD with a dilated DGF for a treeplex $\cX$ corresponds to running OMD locally at each simplex $j$, with the local loss $\hat{\vec{\ell}}^t$ constructed according to Proposition~\ref{prop:decomposition local prox}.
  Optimistic OMD corresponds to the optimistic variant of this local OMD with local loss predictions $\hat{\vec{\ell}}^t, \hat{\vec{m}}_{j}^{t+1}$ again constructed according to Proposition~\ref{prop:decomposition local prox} using $\vec{x}^t$ as Bregman divergence center and $\vec{x}^{t+1}$ for aggregating losses below each simplex.
  Here the modified loss uses $\zsubt{j'}$ and $\vec{x}^{t+1}$ as Bregman divergence center and aggregating loss below, respectively. The prediction $\hat{\vec{m}}_{j}^{t+1}$ uses $\zsubt{j'}$ and $\vec{z}^{t+1}$.
\end{restatable}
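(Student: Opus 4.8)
The plan is to obtain the statement almost directly from Proposition~\ref{prop:decomposition local prox}, by recognizing every update performed by OMD and Optimistic OMD as an instance of the prox mapping $\prox(\vec g,\hat{\vec x})$ of~\eqref{eq:sequence form prox}. Indeed, the OMD rule~\eqref{eq:omd} is $\vec x^{t+1}=\prox(\eta\vec\ell^t,\vec x^t)$, while the optimistic rules~\eqref{eq:optimistic omd} are $\vec x^{t+1}=\prox(\eta\vec m^{t+1},\vec z^t)$ and $\vec z^{t+1}=\prox(\eta\vec\ell^{t+1},\vec z^t)$, where in each case we have multiplied the objective by $\eta>0$, which does not change the minimizer. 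Proposition~\ref{prop:decomposition local prox} then says that evaluating $\prox(\vec g,\hat{\vec x})$ in sequence form amounts to solving, at each decision point $j$, the local problem $\argmin_{\vec b_j\in\symp{n_j}}\{\langle\hat{\vec g}_j,\vec b_j\rangle+D_j(\vec b_j\dmid\hat{\vec x}_j/\hat x_{p_j})\}$ and rescaling the solution by $x_{p_j}$, with $\hat{\vec g}_j$ the modified loss of the proposition, a quantity depending only on the restrictions of $\vec g$ and $\hat{\vec x}$ to the subtree rooted at $j$ (and computable once that subtree has been processed bottom-up). The structural observation that drives everything is that this local problem is \emph{exactly} one step of simplex OMD over $\symp{n_j}$, with step size $\eta$, local loss $\hat{\vec\ell}_j\defeq\hat{\vec g}_j/\eta$, and Bregman center equal to the normalized $j$-block $\hat{\vec x}_j/\hat x_{p_j}$ of the old center; so, by construction, the local regret minimizer at $j$ is the simplex (Optimistic) OMD fed exactly the modified loss/prediction of the proposition, and it only remains to verify that the two families of iterates coincide.

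For OMD I would induct on $t$. Fix a common relative-interior initialization $\vec x^1\in\cX$; it induces the local initializations $\vec b_j^1\defeq\vec x^1_j/x^1_{p_j}\in\symp{n_j}$, which we feed to the local OMDs, so the claim holds at $t=1$. Assuming $\vec x^t_j/x^t_{p_j}=\vec b_j^t$ for all $j$, apply Proposition~\ref{prop:decomposition local prox} to $\vec x^{t+1}=\prox(\eta\vec\ell^t,\vec x^t)$: the local problem at $j$ has Bregman center $\vec x^t_j/x^t_{p_j}=\vec b_j^t$ and local loss $\hat{\vec\ell}_j^t$ assembled bottom-up from $\vec\ell^t$, from $\vec x^t$, and from the already-computed subtree blocks of $\vec x^{t+1}$; by the structural observation its solution is precisely the next local OMD iterate $\vec b_j^{t+1}$, and rescaling recovers $\vec x^{t+1}_j=x^{t+1}_{p_j}\vec b_j^{t+1}$. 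This closes the induction and proves the OMD case.

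For Optimistic OMD I would run the analogous induction on the center sequence $\vec z$ together with the transient sequence $\vec x$. The update $\vec z^{t+1}=\prox(\eta\vec\ell^{t+1},\vec z^t)$ is structurally an OMD step, so it decomposes as above into local OMD steps with centers $\vec z^t_j/z^t_{p_j}$ and local losses built from $\vec\ell^{t+1}$, from $\vec z^t$, and from the subtree blocks of $\vec z^{t+1}$. The update $\vec x^{t+1}=\prox(\eta\vec m^{t+1},\vec z^t)$ decomposes the same way, but as a single \emph{transient} local step from the \emph{same} center $\vec z^t_j/z^t_{p_j}$, with local prediction $\hat{\vec m}_j^{t+1}$ built from $\vec m^{t+1}$, from $\vec z^t$, and from the subtree transient blocks of $\vec x^{t+1}$ — which is exactly the update rule of the optimistic variant of simplex OMD. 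Running both inductions in parallel gives $\vec z^t_j/z^t_{p_j}=\vec b_j^t$ and $\vec x^t_j/x^t_{p_j}=\vec c_j^t$ for the local center and transient iterates, as claimed. Given Proposition~\ref{prop:decomposition local prox}, the only real obstacle here is organizational: keeping straight, through the bottom-up recursion, which prox mapping uses which Bregman center ($\vec z^t$ in both optimistic updates) and which sequence's freshly computed subtree blocks get folded into its modified loss ($\vec z^{t+1}$ for the center update, $\vec x^{t+1}$ for the transient one). A minor point to address separately is the degenerate case $x_{p_j}=0$, where the normalized block is undefined and one fixes an arbitrary convention applied identically on both sides.
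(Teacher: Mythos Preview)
Your proposal is correct and follows essentially the same approach as the paper: recognize each (Optimistic) OMD update as a prox mapping of the form~\eqref{eq:sequence form prox}, invoke Proposition~\ref{prop:decomposition local prox} to decompose it into local simplex prox mappings, and observe that those are precisely local (Optimistic) OMD steps. The paper's own proof is a three-line sketch that does exactly this; your write-up adds the explicit induction on $t$, the bookkeeping of which center and which freshly computed subtree blocks feed each modified loss, and the remark about the degenerate case $x_{p_j}=0$, none of which the paper spells out.
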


Unlike OMD and its optimistic variant, it is not the case that FTRL has a nice interpretation as a local regret minimizer. The reason is that the prox mapping in \eqref{eq:ftrl} or \eqref{eq:oftrl} minimizes the sum of losses, rather than the most recent loss. Because of this, the expected value $\langle \sum_{\tau=1}^t\vec{\ell}_{\subt{j}}^\tau, \vec{x}_{\subt{j}}^{t+1} \rangle$ at simplex $j$, which influences the modified loss at parent simplexes, is computed based on $\vec{x}^{t+1}$ for all $t$ losses. Thus there is no local modified loss that could be received at rounds $1$ through $t$ that accurately reflects the modified loss needed in Proposition~\ref{prop:decomposition local prox}.


\section{Experimental Evaluation}

We experimentally evaluate the performance of optimistic regret minimization methods instantiated with dilated distance-generating functions. We experiment on three games:
\begin{itemize}[nolistsep,itemsep=1mm,leftmargin=*]
  \item \emph{Smallmatrix}, a small $2 \times 2$ matrix game. Given a mixed strategy $\vec{x} = ( x_1, x_2) \in \Delta^2$ for Player 1 and a mixed strategy $\vec{y} = (y_1, y_2)\in\Delta^2$ for Player 2, the payoff function for player 1 is
      $
        u(x, y) = 5 x_1 y_1 - x_1 y_2 + x_2 y_2.
      $
  \item Kuhn poker, already introduced in Section~\ref{sec:treeplexes}. In Kuhn poker, each player
first has to put a payment of 1 into the pot. Each player is then dealt one of the three cards, and the third is put
aside unseen. A single round of betting then occurs: first, Player $1$ can check or bet $1$. Then,
  \begin{itemize}[nolistsep]
  \item If Player $1$ checks Player $2$ can check or raise $1$.
    \begin{itemize}[nolistsep]
      \item If Player $2$ checks a showdown occurs; if Player $2$ raises Player $1$ can fold or call.
        \begin{itemize}
          \item If Player $1$ folds Player $2$ takes the pot; if Player $1$ calls a showdown occurs.
          \end{itemize}
        \end{itemize}
      \item If Player $1$ raises Player $2$ can fold or call.
        \begin{itemize}[nolistsep]
        \item If Player $2$ folds Player $1$ takes the pot; if Player $2$ calls a showdown occurs.
        \end{itemize}
      \end{itemize}
If no player has folded, a showdown occurs where the player with the higher card wins.
  \item Leduc poker, a standard benchmark in imperfect-information game solving~\cite{Southey05:Bayes}. The game is played with
a deck consisting of 5 unique cards with 2 copies of each, and consists of two rounds.
In the first round, each player places an ante of $1$ in the pot and receives a
single private card. A round of betting then takes place with a two-bet maximum,
with Player 1 going first. A public shared card is then dealt face up and
another round of betting takes place. Again, Player 1 goes first, and there is a
two-bet maximum. If one of the players has a pair with the public card, that
player wins. Otherwise, the player with the higher card wins. All bets in the first round are $1$, while all bets in the second round are $2$. This game has 390 decision points and 911 sequences per player.
\end{itemize}

\vspace{5mm}
\noindent\textbf{Fast Last-Iterate Convergence.} In the first set of experiments (Figure~\ref{fig:exp}, top row), we compare the saddle-point gap of the strategy profiles produced by optimistic OMD and optimistic FTRL to that produced by CFR and CFR$^+$. Optimistic OMD and optimistic FTRL were set up with the step-size parameter $\eta=0.1$ in Smallmatrix and $\eta=2$ in Kuhn Poker, and the plots show the last-iterate convergence for the optimistic algorithms, which has recently received attention in the works by~\citet{Chambolle16:Ergodic} and~\citet{Kroer19:First}. Finally, we instantiated optimistic OMD and optimistic FTRL with the Euclidean distance generating function as constructed in Corollary~\ref{cor:dgf weights}. The plots show that---at least in these shallow games---optimistic methods are able to produce even up to 12 orders of magnitude better-approximate saddle-points than CFR and CFR$^+$.

Interestingly, Smallmatrix appears to be a hard instance for CFR$^+$: linear regression on the first 20\,000 iterations of CFR$^+$ shows, with a coefficient of determination of roughly $0.96$, that
$  \log \xi(\vec{x}^T_*, \vec{y}^T_*) \approx -0.7375\cdot \log(T) -2.1349$,
where $(\vec{x}^T_*, \vec{y}^T_*)$ is the average strategy profile (computed using linear averaging, as per $CFR^+$'s construction) up to time $T$. In other words, we have evidence of at least one game in which the approximate saddle-point computed by CFR$^+$ experimentally has residual bounded below by $\Omega(T^{-0.74})$. This observation suggests that the analysis of CFR$^+$ might actually be quite tight, and that CFR$^+$ is \emph{not} an accelerated method.

Figure~\ref{fig:exp} (bottom left) shows the performance of OFTRL in Leduc Poker, compared to CFR and CFR$^+$ (we do not show optimistic OMD, which we found to have worse performance than OFTRL). Here OFTRL performs worse than CFR$^+$.
This shows that in deeper games, more work has to be done to fully exploit the accelerated bounds of optimistic regret minimization methods.

\begin{figure}[ht]\centering
    \includegraphics[scale=.74]{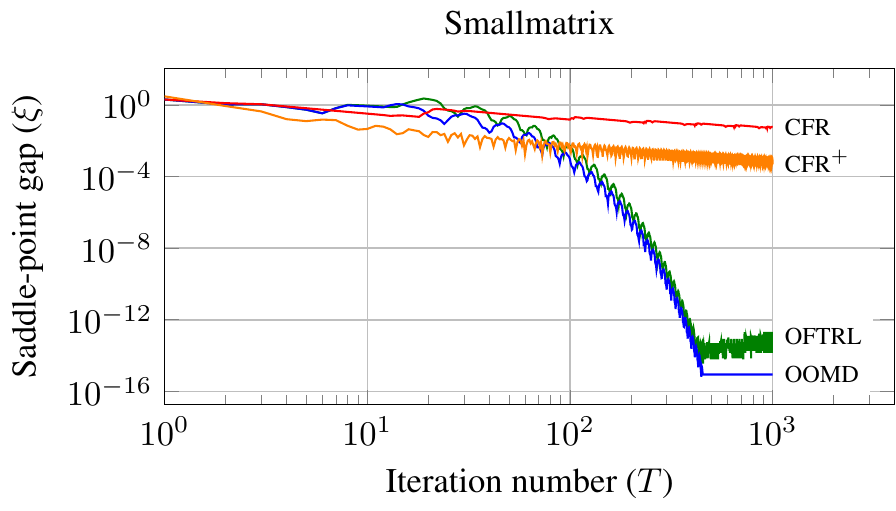}\hspace{1cm}
    \includegraphics[scale=.74]{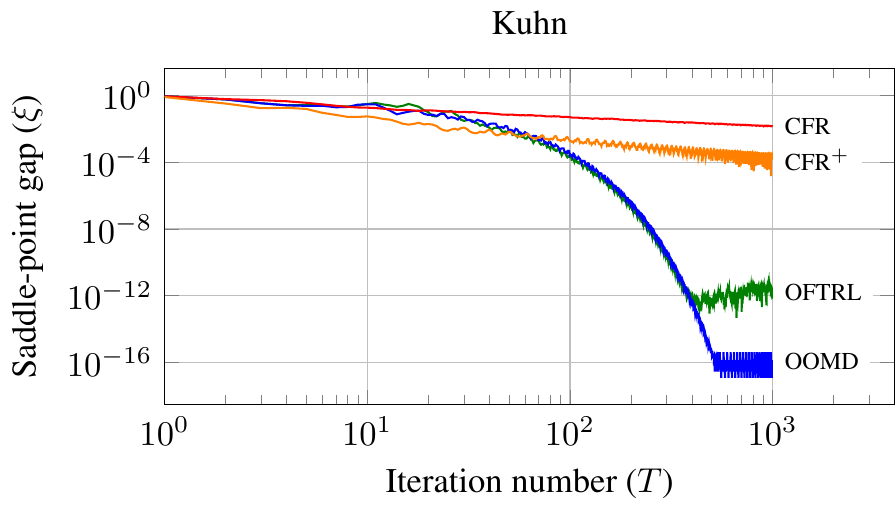}\\
        \includegraphics[scale=.74]{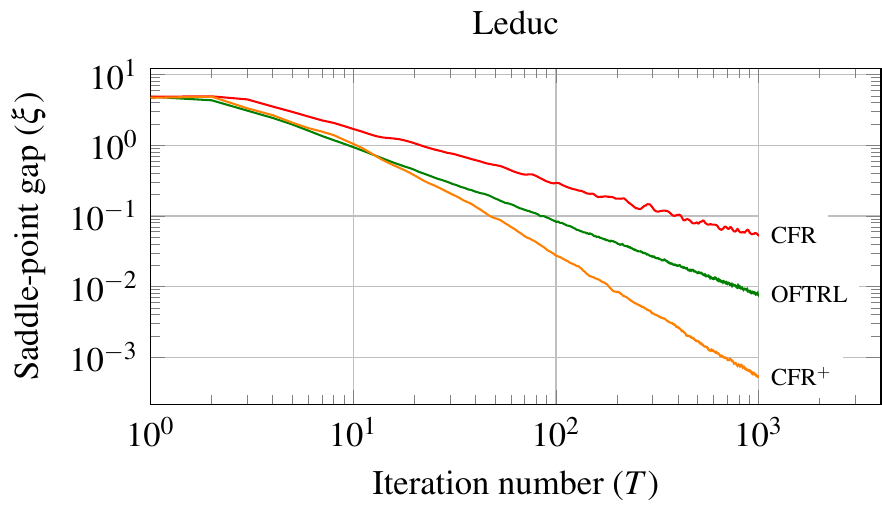}\hspace{1cm}
    \includegraphics[scale=.74]{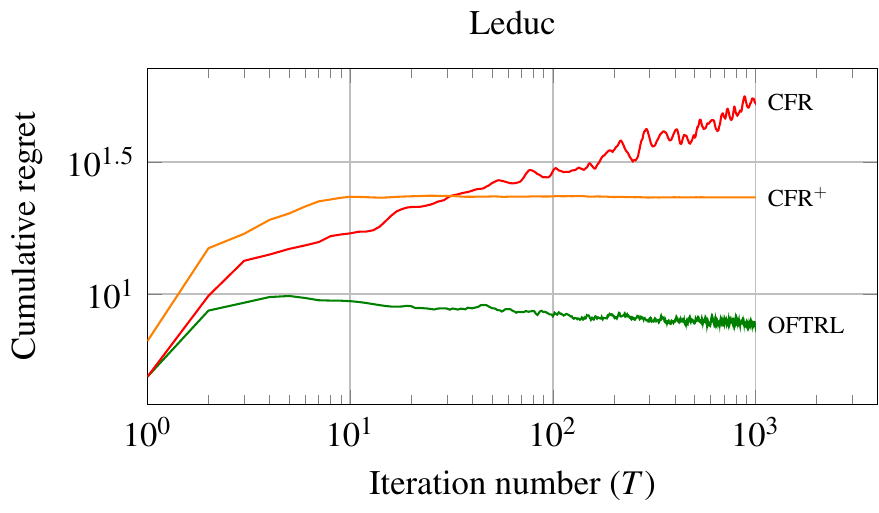}
    \caption{
    (Left and upper right) Saddle-point gap as a function of the number of iterations. The plots show the last-iterate convergence for OOMD and OFTRL.(Lower right) Sum of cumulative regret for both players in Leduc. Optimistic OMD (OOMD) and OFTRL use step-size parameter $\eta=0.1$ in Smallmatrix and $\eta=2$ in Kuhn. OFTRL uses step-size parameter $\eta=200$ in Leduc.
    }
    \label{fig:exp}
\end{figure}

\vspace{3mm}
\noindent\textbf{Comparing the Cumulative Regret.} We also compared the algorithms based on the sum of cumulative regrets (again we omit optimistic OMD, which performed worse than OFTRL). In all three games, OFTRL leads to lower sum of cumulative regrets. Figure~\ref{fig:exp} (bottom right) shows the performance of OFTRL in Leduc Poker. Here, we used the usual average of iterates $\bar{\vec{x}} \defeq \nicefrac{1}{T}\sum_{t=1}^T \vec{x}^t$ (note that the choice of averaging strategy has no effect on the bottom right plot.)

OFTRL's performance matches the theory from Theorem~\ref{thm:oftrl} and Section~\ref{sec:rvu}. In particular, we observe that while OFTRL does not beat the state-of-the-art CFR$^+$ in terms of saddle-point gap, it beats it according to the regret sum metric.
The fact that CFR$^+$ performs worse with respect to the regret sum metric is
somewhat surprising: the entire derivation of CFR and CFR$^+$ is based on
showing bounds on the regret sum. However, the connection between regret and
saddle-point gap (or exploitability) is one-way: if the two regret minimizers
(one per player) have regret $R_1$ and $R_2$, then the saddle point gap can be
easily shown to be less than or equal to $(R_1 + R_2)/T$. However, nothing prevents
it from being much smaller than $(R_1 + R_2)/T$. What we empirically find is that
for CFR$^+$ this bound is very loose. We are not sure why this is the
case, and it potentially warrants further investigation in the future.

\section{Conclusions}

We studied how optimistic regret minimization can be applied in the context of extensive-form games, and introduced the first instantiations of regret-based techniques that achieve $T^{-1}$ convergence to Nash equilibrium in extensive-form games. These methods rely crucially on having a tractable regularizer to maintain feasibility and control the stepsizes on the domain at hand---in our case, the sequence-form polytope. We provided the first explicit bound on the strong convexity properties of dilated distance-generating functions with respect to the Euclidean norm. We also showed that when optimistic regret minimization methods are instantiated with dilated distance-generating functions, the regret updates are local to each information set in the game, mirroring the structure of the counterfactual regret minimization framework. This localization of the updates along the tree structure enables further techniques, such as distributing the updates or skipping updates on cold parts of the game tree. Finally, when used in self play, these optimistic
regret minimization methods guarantee an optimal $T^{-1}$ convergence rate to Nash equilibrium.

We demonstrate that in shallow games, methods based on optimistic regret minimization can significantly outperform CFR and CFR$^+$---even up to 12 orders of magnitude. In deeper games, more work has to be done to fully exploit the accelerated bounds of optimistic regret minimization methods. However, while the strong CFR$^+$ performance in large games remains a mystery, we elucidate some points about its performance---including showing that its theoretically slow convergence bound is somewhat tight. Finally, we showed that when the goal is minimizing regret, rather than computing a Nash equilibrium, optimistic methods can outperform CFR$^+$ even in deep game trees.

\section*{Acknowledgments}
    This material is based on work supported by the National
    Science Foundation under grants IIS-1718457, IIS-1617590,
    and CCF-1733556, and the ARO under award W911NF-17-1-0082. Gabriele Farina is supported by a Facebook fellowship.

\bibliographystyle{custom_arxiv}
\bibliography{dairefs}

\clearpage
\appendix
\section{Proofs: Optimistic Follow-the-Regularized-Leader}

\subsection{Continuity of the Argmin-Function}
Intuitively, the role of the regularizer $d$ is to \emph{smooth out} the linear objective function $\langle \cdot, \vec{L}\rangle$. So, it seems only reasonable to expect that, the higher the constant that multiplies $d$, the less the argmin $\tilde{x}(\vec{L}) = \argmin_{\vec{x} \in \cX} \langle \vec{L}, \vec{x} \rangle + \frac{1}{\eta}d(\vec{x})$ is affected by small changes in $\vec{L}$. In fact, the following holds:

\begin{restatable}{lemma}{xtildelipschitz}\label{lem:x tilde lipschitz}
  Let $d$ be 1-strongly convex with respect to a norm $\|\cdot\|$. The argmin-function $\tilde{x}$ is $\eta$-Lipschitz continuous with respect to the dual norm $\|\cdot\|_\ast$, that is
  \[
    \|\tilde{x}(\vec{L}) - \tilde{x}(\vec{L}')\| \le \eta \|\vec{L} - \vec{L}'\|_*.
  \]
\end{restatable}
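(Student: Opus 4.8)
The plan is to compare the two minimizers $\vec{x} \defeq \tilde{x}(\vec{L})$ and $\vec{x}' \defeq \tilde{x}(\vec{L}')$ through the first-order optimality conditions of the two regularized problems, and then exploit the fact that $\frac{1}{\eta}d$ is $\frac{1}{\eta}$-strongly convex with respect to $\|\cdot\|$. Concretely, since $\vec{x}$ minimizes the convex function $\vec{u}\mapsto\langle\vec{L},\vec{u}\rangle + \frac{1}{\eta}d(\vec{u})$ over the convex set $\cX$, the variational inequality
\[
  \Big\langle \vec{L} + \tfrac{1}{\eta}\nabla d(\vec{x}),\ \vec{u} - \vec{x}\Big\rangle \ \ge\ 0 \qquad \forall \vec{u}\in\cX
\]
holds, and symmetrically for $\vec{x}'$ with $\vec{L}'$ in place of $\vec{L}$. (Here I am using that $d$ is continuously differentiable on the relevant part of $\cX$; if the minimizer can lie on a face where $d$ fails to be differentiable one replaces $\nabla d$ by a subgradient, which does not change anything below.)

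\textbf{Key steps.} First I would instantiate the optimality condition for $\vec{x}$ at the point $\vec{u}=\vec{x}'$, and the one for $\vec{x}'$ at $\vec{u}=\vec{x}$, obtaining
\[
  \Big\langle \vec{L} + \tfrac{1}{\eta}\nabla d(\vec{x}),\ \vec{x}' - \vec{x}\Big\rangle \ge 0,
  \qquad
  \Big\langle \vec{L}' + \tfrac{1}{\eta}\nabla d(\vec{x}'),\ \vec{x} - \vec{x}'\Big\rangle \ge 0.
\]
Adding these two inequalities and rearranging gives
\[
  \langle \vec{L} - \vec{L}',\ \vec{x}' - \vec{x}\rangle \ \ge\ \tfrac{1}{\eta}\,\langle \nabla d(\vec{x}) - \nabla d(\vec{x}'),\ \vec{x} - \vec{x}'\rangle .
\]
Next I would invoke strong convexity of $d$: 1-strong convexity with respect to $\|\cdot\|$ implies the monotonicity estimate $\langle \nabla d(\vec{x}) - \nabla d(\vec{x}'), \vec{x} - \vec{x}'\rangle \ge \|\vec{x}-\vec{x}'\|^2$. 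Combining this with the previous display yields $\langle \vec{L} - \vec{L}',\ \vec{x}' - \vec{x}\rangle \ge \tfrac{1}{\eta}\|\vec{x}-\vec{x}'\|^2$. Finally, bounding the left-hand side by the dual-norm (Hölder) inequality, $\langle \vec{L} - \vec{L}',\ \vec{x}' - \vec{x}\rangle \le \|\vec{L}-\vec{L}'\|_\ast\,\|\vec{x}'-\vec{x}\|$, gives $\|\vec{L}-\vec{L}'\|_\ast\,\|\vec{x}-\vec{x}'\| \ge \tfrac{1}{\eta}\|\vec{x}-\vec{x}'\|^2$; dividing through by $\|\vec{x}-\vec{x}'\|$ (the claim is trivial when $\vec{x}=\vec{x}'$) yields $\|\vec{x}-\vec{x}'\|\le\eta\|\vec{L}-\vec{L}'\|_\ast$, as desired.

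\textbf{Main obstacle.} There is no serious analytical difficulty here — the only point that needs a little care is the validity of the first-order optimality condition in the form stated, i.e.\ differentiability of $d$ at the optimizers. Since the DGFs of interest (dilated Euclidean, dilated entropy) are well-behaved on the sequence-form polytope, and the argument goes through verbatim with subgradients in place of $\nabla d$ whenever an optimizer lands on the boundary, this is a routine technicality rather than a real obstacle. The substantive content is simply the interplay between the variational inequalities and the strong convexity of the regularizer.
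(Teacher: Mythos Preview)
Your proof is correct and follows essentially the same approach as the paper's: write the first-order optimality (variational) inequalities for both minimizers, add them, apply the strong-convexity monotonicity bound $\langle \nabla d(\vec{x})-\nabla d(\vec{x}'),\vec{x}-\vec{x}'\rangle\ge\|\vec{x}-\vec{x}'\|^2$ on one side and the generalized Cauchy--Schwarz (dual-norm) inequality on the other, then divide through. Your additional remark about handling boundary points via subgradients is a nice robustness note that the paper omits.
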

\begin{proof}
  The variational inequality for the optimality of $\tilde{x}(\vec{L})$ implies
  \begin{equation}\label{eq:vi1}
    \left\langle \vec{L} + \frac{1}{\eta}\nabla d(\tilde{x}(\vec{L})), \tilde{x}(\vec{L}') - \tilde{x}(\vec{L}) \right\rangle \ge 0.
  \end{equation}
  Symmetrically for $\tilde{x}(\vec{L}')$, we find that
  \begin{equation}\label{eq:vi2}
    \left\langle \vec{L}' + \frac{1}{\eta}d(\tilde{x}(\vec{L}')), \tilde{x}(\vec{L}) - \tilde{x}(\vec{L}') \right\rangle \ge 0.
  \end{equation}
  Summing inequalities~\eqref{eq:vi1} and~\eqref{eq:vi2}, we obtain
  \begin{align*}
    &\frac{1}{\eta} \left\langle \nabla d(\tilde{x}(\vec{L})) - \nabla d(\tilde{x}(\vec{L}')), \tilde{x}(\vec{L}) - \tilde{x}(\vec{L}') \right\rangle \le
    \left\langle \vec{L}' - \vec{L}, \tilde{x}(\vec{L}) - \tilde{x}(\vec{L}') \right\rangle.
  \end{align*}
  Using the 1-strong convexity of $d(\cdot)$ on the left-hand side and the generalized Cauchy-Schwarz inequality on the right-hand side, we obtain
  \begin{equation*}
    \frac{1}{\eta}\|\tilde{x}(\vec{L}) - \tilde{x}(\vec{L}')\|^2 \le \|\tilde{x}(\vec{L}) - \tilde{x}(\vec{L}')\|\,\|\vec{L} - \vec{L}'\|_\ast,
  \end{equation*}
  and dividing by $\|\tilde{x}(\vec{L}) - \tilde{x}(\vec{L}')\|$ we obtain the Lipschitz continuity of the argmin-function $\tilde{x}$.
\end{proof}

Another observation that will be crucial in the analysis is that the objective function $\langle \vec{L}, \vec{x}\rangle + (1/\eta) d(\vec{x})$ is $(1/\eta)$-strongly convex. Hence, for all $\vec{L}$ and $\hat{\vec{x}}\in \cX$,
  \begin{equation}\label{eq:strong convexity of obj}
    \left(\langle \vec{L}, \hat{\vec{x}} \rangle + \frac{1}{\eta} d(\hat{\vec{x}})\right) \ge \left(\langle\vec{L}, \tilde{x}(\vec{L})\rangle + \frac{1}{\eta} d(\tilde{x}(\vec{L}))\right) + \frac{1}{2\eta} \|\hat{\vec{x}} - \tilde{x}(\vec{L})\|^2.
  \end{equation}

\subsubsection{The Omniscient Case}
The RVU property (Definition~\ref{def:rvu}) implies that if $\vec{m}^t = \vec{\ell}^t$ for all $t=1,\dots, T$, then the regret cumulated is bounded above by a constant, independent on the time horizon $T$, and can only go down over time. We now show that this indeed holds for OFTRL. In what follows, we will use the notation $\vec{L}^t$ to denote $\vec{L}^t \defeq \sum_{\tau=1}^t \vec{\ell}^\tau$ for all $t \ge 1$, and $\vec{L}^0 \defeq \vec{0}$.
\begin{lemma}\label{lem:oftrl omniscient}
  Let $T > 0$, and assume that OFTRL is set up so that $\vec{\ell}^t = \vec{m}^t$ (i.e., the prediction is \emph{omniscient}) for all $t = 1,\dots, T$. Furthermore, denote $\vec{x}_o^t \defeq \tilde{x}(\vec{L}^t)$ the decisions produces by OFTRL at all times $t \ge 0$. Then, the regret against any strategy $\hat{\vec{x}}\in \cX$ is bounded above as
  \[
    R^T(\hat{\vec{x}}) \defeq \sum_{t=1}^T \langle \vec{\ell}^t, \vec{x}_o^t - \hat{\vec{x}} \rangle \le \frac{1}{\eta}\big(d(\hat{\vec{x}}) - d(\vec{x}_o^0)\big) - \frac{1}{2\eta} \sum_{t=1}^T \|\vec{x}_o^t - \vec{x}_o^{t-1}\|^2.
  \]
  As a direct consequence, if OFTRL is fed with exact predictions, the cumulated regret is bounded as
  \[
    R^T = \max_{\hat{\vec{x}} \in \cX} R^T(\hat{\vec{x}}) \le \frac{\Delta_d}{\eta} - \frac{1}{2\eta} \sum_{t=1}^T \|\vec{x}_o^t - \vec{x}_o^{t-1}\|^2,
  \]
  where $\Delta_d \defeq \max_{\vec{x},\vec{y}\in\cX} \{d(\vec{x}) - d(\vec{y})\}$.
\end{lemma}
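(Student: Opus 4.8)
The plan is to prove the ``omniscient'' bound for OFTRL by a standard \emph{follow-the-leader / be-the-leader} telescoping argument, sharpened with the strong convexity of the regularized objective (inequality~\eqref{eq:strong convexity of obj}) to extract the negative $\sum \|\vec{x}_o^t - \vec{x}_o^{t-1}\|^2$ term. Since the prediction is exact, $\vec{m}^t = \vec{\ell}^t$, the OFTRL update~\eqref{eq:oftrl} reads $\vec{x}_o^t = \tilde{x}(\vec{L}^{t-1} + \vec{\ell}^t) = \tilde{x}(\vec{L}^t)$, i.e. OFTRL with omniscient predictions is exactly ``be-the-leader'': at time $t$ it plays the minimizer of the losses \emph{including} round $t$. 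This is the key structural observation that makes the constant (time-independent) regret bound possible.

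First I would fix a comparator $\hat{\vec{x}} \in \cX$ and write $\Phi^t \defeq \langle \vec{L}^t, \vec{x}_o^t\rangle + \tfrac1\eta d(\vec{x}_o^t)$ for the optimal regularized value through round $t$, with $\Phi^0 = \tfrac1\eta d(\vec{x}_o^0)$. The core inequality is the per-step bound
\[
  \langle \vec{\ell}^t, \vec{x}_o^t\rangle \;\le\; \Phi^t - \Phi^{t-1} - \frac{1}{2\eta}\|\vec{x}_o^t - \vec{x}_o^{t-1}\|^2 .
\]
To see this, apply~\eqref{eq:strong convexity of obj} with the ``current'' cumulative loss $\vec{L} = \vec{L}^t$, evaluated at $\hat{\vec{x}} = \vec{x}_o^{t-1}$ (a feasible point): this gives $\langle \vec{L}^t, \vec{x}_o^{t-1}\rangle + \tfrac1\eta d(\vec{x}_o^{t-1}) \ge \Phi^t + \tfrac1{2\eta}\|\vec{x}_o^{t-1} - \vec{x}_o^t\|^2$. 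Since $\langle \vec{L}^t, \vec{x}_o^{t-1}\rangle = \langle \vec{L}^{t-1}, \vec{x}_o^{t-1}\rangle + \langle \vec{\ell}^t, \vec{x}_o^{t-1}\rangle$ and $\langle \vec{L}^{t-1}, \vec{x}_o^{t-1}\rangle + \tfrac1\eta d(\vec{x}_o^{t-1}) = \Phi^{t-1}$, rearranging yields $\langle \vec{\ell}^t, \vec{x}_o^{t-1}\rangle \ge \Phi^t - \Phi^{t-1} + \tfrac1{2\eta}\|\vec{x}_o^t - \vec{x}_o^{t-1}\|^2$; so the sum $\sum_t \langle \vec{\ell}^t, \vec{x}_o^{t-1}\rangle \ge \Phi^T - \Phi^0 + \tfrac1{2\eta}\sum_t\|\vec{x}_o^t - \vec{x}_o^{t-1}\|^2$. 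I would then convert from $\vec{x}_o^{t-1}$ to $\vec{x}_o^t$ in the loss terms: since $\vec{x}_o^t$ minimizes $\langle \vec{L}^t, \cdot\rangle + \tfrac1\eta d(\cdot)$ over $\cX$ and $\hat{\vec{x}}$ is feasible, $\Phi^T = \langle\vec{L}^T, \vec{x}_o^T\rangle + \tfrac1\eta d(\vec{x}_o^T) \le \langle\vec{L}^T, \hat{\vec{x}}\rangle + \tfrac1\eta d(\hat{\vec{x}})$. Combining,
\[
  \sum_{t=1}^T \langle \vec{\ell}^t, \vec{x}_o^{t-1}\rangle \;\le\; \langle \vec{L}^T, \hat{\vec{x}}\rangle + \frac1\eta\big(d(\hat{\vec{x}}) - d(\vec{x}_o^0)\big) - \frac{1}{2\eta}\sum_{t=1}^T\|\vec{x}_o^t - \vec{x}_o^{t-1}\|^2 .
\]

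The one remaining gap is that the regret $R^T(\hat{\vec{x}})$ involves $\sum_t \langle \vec{\ell}^t, \vec{x}_o^t\rangle$, not $\sum_t\langle\vec{\ell}^t, \vec{x}_o^{t-1}\rangle$. The be-the-leader identity handles exactly this: by the standard inductive ``be-the-leader'' lemma applied to the functions $f^0(\vec{x}) = \tfrac1\eta d(\vec{x})$, $f^t(\vec{x}) = \langle\vec{\ell}^t,\vec{x}\rangle$, one has $\sum_{t=1}^T \langle\vec{\ell}^t, \vec{x}_o^t\rangle \le \sum_{t=1}^T\langle\vec{\ell}^t, \vec{x}_o^{t-1}\rangle + \tfrac1\eta\big(d(\vec{x}_o^0) - d(\vec{x}_o^T)\big)$ — or, more directly, I can redo the displayed telescoping argument above using $\vec{x}_o^t$ in the loss terms from the start: apply~\eqref{eq:strong convexity of obj} with $\vec{L} = \vec{L}^{t-1}$ (the ``previous'' leader) at the feasible point $\hat{\vec{x}} = \vec{x}_o^t$, giving $\langle\vec{L}^{t-1},\vec{x}_o^t\rangle + \tfrac1\eta d(\vec{x}_o^t) \ge \Phi^{t-1} + \tfrac1{2\eta}\|\vec{x}_o^t - \vec{x}_o^{t-1}\|^2$; since $\langle\vec{L}^{t-1},\vec{x}_o^t\rangle = \langle\vec{L}^t,\vec{x}_o^t\rangle - \langle\vec{\ell}^t,\vec{x}_o^t\rangle = \Phi^t - \tfrac1\eta d(\vec{x}_o^t) - \langle\vec{\ell}^t,\vec{x}_o^t\rangle$, this rearranges to $\langle\vec{\ell}^t, \vec{x}_o^t\rangle \le \Phi^t - \Phi^{t-1} - \tfrac1{2\eta}\|\vec{x}_o^t - \vec{x}_o^{t-1}\|^2$, which is the per-step bound I wanted. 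Summing over $t$ telescopes $\Phi^T - \Phi^0$, and then $\Phi^T \le \langle\vec{L}^T,\hat{\vec{x}}\rangle + \tfrac1\eta d(\hat{\vec{x}})$ and $\Phi^0 = \tfrac1\eta d(\vec{x}_o^0)$ give
\[
  \sum_{t=1}^T\langle\vec{\ell}^t, \vec{x}_o^t\rangle \;\le\; \langle\vec{L}^T,\hat{\vec{x}}\rangle + \frac1\eta\big(d(\hat{\vec{x}}) - d(\vec{x}_o^0)\big) - \frac{1}{2\eta}\sum_{t=1}^T\|\vec{x}_o^t - \vec{x}_o^{t-1}\|^2,
\]
which is precisely $R^T(\hat{\vec{x}}) \le \tfrac1\eta(d(\hat{\vec{x}}) - d(\vec{x}_o^0)) - \tfrac1{2\eta}\sum_t\|\vec{x}_o^t - \vec{x}_o^{t-1}\|^2$ after subtracting $\langle\vec{L}^T,\hat{\vec{x}}\rangle = \sum_t\langle\vec{\ell}^t,\hat{\vec{x}}\rangle$. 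Finally, taking the max over $\hat{\vec{x}}\in\cX$ and bounding $d(\hat{\vec{x}}) - d(\vec{x}_o^0) \le \max_{\vec{x},\vec{y}\in\cX}\{d(\vec{x}) - d(\vec{y})\} = \Delta_d$ gives the second displayed inequality.

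The main obstacle — really the only subtlety — is getting the loss terms evaluated at $\vec{x}_o^t$ rather than at the ``lagged'' iterate $\vec{x}_o^{t-1}$ while still extracting the full $\tfrac1{2\eta}\|\vec{x}_o^t - \vec{x}_o^{t-1}\|^2$ penalty; the trick is to apply the strong-convexity inequality~\eqref{eq:strong convexity of obj} at the \emph{previous} leader $\vec{L}^{t-1}$ evaluated at the \emph{current} iterate $\vec{x}_o^t$, which is exactly where the omniscience $\vec{m}^t = \vec{\ell}^t$ (equivalently, $\vec{x}_o^t = \tilde{x}(\vec{L}^t)$) is used. Everything else is routine telescoping and the feasibility of the comparator. (Lemma~\ref{lem:x tilde lipschitz} is not needed here; it will be used in the non-omniscient case to bound $\|\vec{x}_o^t - \vec{x}_o^{t-1}\|$ in terms of $\|\vec{\ell}^t - \vec{m}^t\|_\ast$.)
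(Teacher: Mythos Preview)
Your proof is correct and is essentially the same argument as the paper's: both apply the strong-convexity inequality~\eqref{eq:strong convexity of obj} with $\vec{L} = \vec{L}^{t-1}$ at the point $\vec{x}_o^t$, then telescope and finish with the optimality of $\vec{x}_o^T$ against the comparator $\hat{\vec{x}}$. The only cosmetic difference is that you package the telescoping via the potential $\Phi^t = \langle \vec{L}^t, \vec{x}_o^t\rangle + \tfrac{1}{\eta}d(\vec{x}_o^t)$, whereas the paper unrolls the same identity as an explicit induction on $\langle \vec{L}^t, \vec{x}_o^t - \hat{\vec{x}}\rangle$; your formulation is slightly cleaner but mathematically identical.
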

\begin{proof}

  Equation~\eqref{eq:strong convexity of obj} immediately implies that for all $\hat{\vec{x}} \in \cX$ and time $t$,
  \begin{equation}\label{eq:var ineq omniscient 2}
    \langle \vec{L}^t, \hat{\vec{x}} - \vec{x}_o^t \rangle \ge -\frac{1}{\eta} \big(d(\hat{\vec{x}})-d(\vec{x}_o^t)\big) + \frac{1}{2\eta} \|\hat{\vec{x}} - \vec{x}_o^t\|^2.
  \end{equation}
  Consequently, for all $\hat{\vec{x}} \in \cX$ we have
  \begin{align}
    \langle \vec{L}^T, \vec{x}_o^T - \hat{\vec{x}} \rangle &= \langle \vec{\ell}^T, \vec{x}_o^T - \hat{\vec{x}} \rangle + \langle \vec{L}^{T-1},\vec{x}_o^{T-1} - \hat{\vec{x}}\rangle + \langle \vec{L}^{T-1}, \vec{x}_o^{T} - \vec{x}_o^{T-1} \rangle\nonumber\\
    &\ge \langle \vec{L}^{T-1}, \vec{x}_o^{T-1}-\hat{\vec{x}}\rangle + \langle \vec{\ell}^T, \vec{x}_o^T - \hat{\vec{x}}  \rangle-\frac{1}{\eta}(d(\vec{x}_o^{T}) - d(\vec{x}_o^{T-1}))+\frac{1}{2\eta}\| \vec{x}_o^T - \vec{x}_o^{T-1}\|^2,\label{eq:omniscient induction}
\end{align}
where the first inequality is by~\eqref{eq:var ineq omniscient 2} applied to $\hat{\vec{x}} = \vec{x}^{T}$ and $t={T-1}$.
Using~\eqref{eq:omniscient induction} inductively (recursively expanding terms of the form $\langle \vec{L}^t, \vec{x}_o^t - \hat{\vec{x}} \rangle$ for all $t = T-1, T-2, \dots, 1$), we obtain
\begin{align*}
    \langle \vec{L}^T, \vec{x}_o^T - \hat{\vec{x}} \rangle &\ge \left(\sum_{t=1}^T \langle \vec{\ell}^t, \vec{x}_o^t - \hat{\vec{x}}\rangle\right) - \frac{1}{\eta}\big(d(\vec{x}_o^T) - d(\vec{x}_o^0)\big) + \frac{1}{2\eta}\sum_{t=1}^T \| \vec{x}_o^{t} - \vec{x}_o^{t-1}\|^2 ,
  \end{align*}
  Finally, inverting the sides of the inequality, we find
\begin{align*}
    R^T(\hat{\vec{x}}) &= \sum_{t=1}^T \langle\vec{\ell}^t, \vec{x}_o^t -\hat{\vec{x}}\rangle\\
    &\le \langle \vec{L}^T, \vec{x}_o^T - \hat{\vec{x}} \rangle + \frac{1}{\eta}\big(d(\vec{x}_o^T) - d(\vec{x}_o^0)\big) - \frac{1}{2\eta}\sum_{t=1}^T \| \vec{x}_o^{t} - \vec{x}_o^{t-1}\|^2\\
    &\le \frac{1}{\eta}\big(d(\hat{\vec{x}}) - d(\vec{x}_o^T) \big) + \frac{1}{\eta}\big(d(\vec{x}_o^T) - d(\vec{x}_o^0)\big) - \frac{1}{2\eta}\sum_{t=1}^T \| \vec{x}_o^{t} - \vec{x}_o^{t-1}\|^2 \\
    &= \frac{1}{\eta}\big(d(\hat{\vec{x}}) - d(\vec{x}_o^0)\big) - \frac{1}{2\eta}\sum_{t=1}^T \| \vec{x}_o^{t} - \vec{x}_o^{t-1}\|^2
  \end{align*}
  where the second inequality is by~\eqref{eq:var ineq omniscient 2} applied to $t = T$.
\end{proof}
\subsection{Bridging the Gap: The Non-Omniscient Case}
We are now ready to complete the proof of Theorem~\ref{thm:oftrl}. The idea of the proof is to use the Lipschitz continuity of the argmin-function (Lemma~\ref{lem:x tilde lipschitz}) to show that the regret generated by the non-omniscience of the OFTRL (that is, the fact that $\vec{m}^t \neq \vec{\ell}^t$) is proportional to the distance $\sum_t \|\vec{m}^t - \vec{\ell}^t\|_\ast^2$. A naive attempt would could be as follows:
\begin{align*}
  \langle \vec{\ell}^t, \vec{x}^t \rangle &= \langle \vec{\ell}^t, \tilde{x}(\vec{L}^t) \rangle + \langle \vec{\ell}^t, \tilde{x}(\vec{L}^{t-1} + \vec{m}^t) - \tilde{x}(\vec{L}^t)\rangle\\
    &\le \langle \vec{\ell}^t, \tilde{x}(\vec{L}^t)\rangle + {\eta \|\vec{\ell}^t\|_\ast} \|\vec{\ell}^t - \vec{m}^t\|_\ast,
\end{align*}
where the inequality follows from from the generalized Cauchy-Schwarz inequality and from Lemma~\ref{lem:x tilde lipschitz}. Unfortunately, the approach above is not powerful enough, as it would imply that the increase in regret is proportional to $\sum_t \|\vec{m}^t-\vec{\ell}^t\|_\ast$ (notice the different exponent). In order to obtain the better bound, we write instead
\begin{align}
    \langle \vec{\ell}^t, \vec{x}^t \rangle &= \langle \vec{\ell}^t, \tilde{x}(\vec{L}^t) \rangle + \langle \vec{\ell}^t - \vec{m}^t, \tilde{x}(\vec{L}^{t-1} + \vec{m}^t) - \tilde{x}(\vec{L}^t)\rangle + \langle \vec{m}^t, \tilde{x}(\vec{L}^{t-1} + \vec{m}^t) - \tilde{x}(\vec{L}^t)\rangle\nonumber\\
    &\le \langle \vec{\ell}^t, \tilde{x}(\vec{L}^t)\rangle + \eta\|\vec{\ell}^t - \vec{m}^t\|^2_\ast + \langle \vec{m}^t, \tilde{x}(\vec{L}^{t-1} + \vec{m}^t) - \tilde{x}(\vec{L}^t)\rangle\label{eq:omniscient completion}.
\end{align}
The first term corresponds to the case of omniscient predictions, which was already analyzed in the previous section. Hence, we are left with the task of bounding the loss from the last term. To this end, we will use the following lemma:
\begin{lemma}\label{lem:technical}
  Let $\vec{x}^t \defeq \tilde{x}(\vec{L}^{t-1} + \vec{m}^t)$ and $\vec{x}_o^t \defeq \tilde{x}(\vec{L}^t)$.
  For all $T$,
  \[
    \sum_{t=1}^T \langle \vec{m}^t, \vec{x}^t - \vec{x}_o^t\rangle \le  -R_o^T + \frac{\Delta_d}{\eta}- \frac{1}{4\eta} \sum_{t=1}^{T-1} \|\vec{x}^{t+1} - \vec{x}^t\|^2
  \]
  where
  \[
    R^T_o \defeq \max_{\hat{\vec{x}}\in\cX}\sum_{t=1}^T \langle \vec{\ell}^t, \vec{x}_o^t - \hat{\vec{x}} \rangle
  \]
  is the regret of the omniscient minimizer.
\end{lemma}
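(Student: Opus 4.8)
The idea is a potential‑telescoping argument built on the $(1/\eta)$‑strong convexity of the regularized objective (inequality \eqref{eq:strong convexity of obj}), applied \emph{twice} per step: once centered at $\vec{x}_o^{t-1}$ and once centered at $\vec{x}^t$. Recall $\vec{x}_o^t = \tilde{x}(\vec{L}^t)$ and $\vec{x}^t = \tilde{x}(\vec{L}^{t-1}+\vec{m}^t)$, and for a vector $\vec{g}$ write $\Psi(\vec{g}) \defeq \min_{\vec{x}\in\cX}\{\langle \vec{g},\vec{x}\rangle + \frac1\eta d(\vec{x})\}$ for the optimal value; by \eqref{eq:strong convexity of obj}, $\langle\vec{g},\hat{\vec{x}}\rangle + \frac1\eta d(\hat{\vec{x}}) \ge \Psi(\vec{g}) + \frac1{2\eta}\|\hat{\vec{x}}-\tilde{x}(\vec{g})\|^2$ for all $\hat{\vec{x}}\in\cX$. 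The quantity $\Psi(\vec{L}^t)$ will be the telescoping variable.

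First I would derive a one‑step bound on $\langle \vec{m}^t, \vec{x}^t - \vec{x}_o^t\rangle$. Applying the strong‑convexity inequality with $\vec{g}=\vec{L}^{t-1}$ at $\hat{\vec{x}}=\vec{x}^t$, together with the identity $\langle\vec{L}^{t-1}+\vec{m}^t,\vec{x}^t\rangle + \frac1\eta d(\vec{x}^t) = \Psi(\vec{L}^{t-1}+\vec{m}^t)$, gives
\[
  \langle\vec{m}^t,\vec{x}^t\rangle \le \Psi(\vec{L}^{t-1}+\vec{m}^t) - \Psi(\vec{L}^{t-1}) - \tfrac1{2\eta}\|\vec{x}^t-\vec{x}_o^{t-1}\|^2 .
\]
Applying it again with $\vec{g}=\vec{L}^{t-1}+\vec{m}^t$ at $\hat{\vec{x}}=\vec{x}_o^t$, and using $\langle\vec{L}^{t-1},\vec{x}_o^t\rangle + \frac1\eta d(\vec{x}_o^t) = \Psi(\vec{L}^t) - \langle\vec{\ell}^t,\vec{x}_o^t\rangle$, gives
\[
  \Psi(\vec{L}^{t-1}+\vec{m}^t) \le \Psi(\vec{L}^t) - \langle\vec{\ell}^t,\vec{x}_o^t\rangle + \langle\vec{m}^t,\vec{x}_o^t\rangle - \tfrac1{2\eta}\|\vec{x}_o^t-\vec{x}^t\|^2 .
\]
Chaining these two inequalities and subtracting $\langle\vec{m}^t,\vec{x}_o^t\rangle$ from both sides yields the per‑step estimate
\[
  \langle\vec{m}^t,\vec{x}^t-\vec{x}_o^t\rangle \le \big(\Psi(\vec{L}^t)-\Psi(\vec{L}^{t-1})\big) - \langle\vec{\ell}^t,\vec{x}_o^t\rangle - \tfrac1{2\eta}\big(\|\vec{x}^t-\vec{x}_o^{t-1}\|^2 + \|\vec{x}_o^t-\vec{x}^t\|^2\big).
\]

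Summing over $t=1,\dots,T$, the first bracket telescopes to $\Psi(\vec{L}^T)-\Psi(\vec{0})$; bounding $\Psi(\vec{L}^T) \le \langle\vec{L}^T,\vec{u}\rangle + \frac1\eta d(\vec{u})$ for an arbitrary $\vec{u}\in\cX$ and $\Psi(\vec{0}) = \frac1\eta\min_{\vec{x}} d(\vec{x})$ gives $\Psi(\vec{L}^T)-\Psi(\vec{0}) \le \Delta_d/\eta + \langle\vec{L}^T,\vec{u}\rangle$, and $\langle\vec{L}^T,\vec{u}\rangle - \sum_t\langle\vec{\ell}^t,\vec{x}_o^t\rangle = -\sum_t\langle\vec{\ell}^t,\vec{x}_o^t-\vec{u}\rangle$, which becomes $-R_o^T$ once $\vec{u}$ is chosen to be the maximizer defining $R_o^T$. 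This leaves
\[
  \sum_{t=1}^T\langle\vec{m}^t,\vec{x}^t-\vec{x}_o^t\rangle \le \frac{\Delta_d}{\eta} - R_o^T - \frac1{2\eta}\sum_{t=1}^T\big(\|\vec{x}^t-\vec{x}_o^{t-1}\|^2 + \|\vec{x}_o^t-\vec{x}^t\|^2\big),
\]
and it remains only to convert the negative quadratic terms into $\frac1{4\eta}\sum_{t=1}^{T-1}\|\vec{x}^{t+1}-\vec{x}^t\|^2$. This last step is the one subtle point, and it is exactly why the per‑step bound needs \emph{both} a $\|\vec{x}^t-\vec{x}_o^{t-1}\|^2$ term (strong convexity around $\vec{x}_o^{t-1}$) and a $\|\vec{x}_o^t-\vec{x}^t\|^2$ term (strong convexity around $\vec{x}^t$): from $\|\vec{x}^{t+1}-\vec{x}^t\|\le \|\vec{x}^{t+1}-\vec{x}_o^t\| + \|\vec{x}_o^t-\vec{x}^t\|$ we get $\|\vec{x}^{t+1}-\vec{x}^t\|^2 \le 2\|\vec{x}^{t+1}-\vec{x}_o^t\|^2 + 2\|\vec{x}_o^t-\vec{x}^t\|^2$, and after reindexing $\|\vec{x}^{t+1}-\vec{x}_o^t\|^2$ is the $(t{+}1)$‑th term of $\sum_t\|\vec{x}^t-\vec{x}_o^{t-1}\|^2$, so $\sum_{t=1}^{T-1}\|\vec{x}^{t+1}-\vec{x}^t\|^2 \le 2\sum_{t=1}^{T}(\|\vec{x}^t-\vec{x}_o^{t-1}\|^2 + \|\vec{x}_o^t-\vec{x}^t\|^2)$ — precisely the factor $2$ that turns $\frac1{2\eta}$ into $\frac1{4\eta}$. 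If one extracted only one of the two quadratic terms, one would instead be forced to invoke the Lipschitz bound $\|\vec{x}_o^t-\vec{x}^t\|\le\eta\|\vec{\ell}^t-\vec{m}^t\|_\ast$ (Lemma~\ref{lem:x tilde lipschitz}), leaking a spurious $O(\eta)\sum_t\|\vec{\ell}^t-\vec{m}^t\|_\ast^2$ and weakening the constant $\beta$ in Theorem~\ref{thm:oftrl}; keeping both terms is what avoids this and gives the clean statement. Plugging the conversion into the display above finishes the proof.
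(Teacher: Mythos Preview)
Your proof is correct and follows essentially the same approach as the paper's: the two per-step applications of \eqref{eq:strong convexity of obj} are exactly the paper's inequalities \eqref{eq:sub1} and \eqref{eq:sub2}, your telescoping potential $\Psi(\vec{L}^t)$ is literally the paper's quantity $\langle \vec{L}^t,\vec{x}_o^t\rangle + \tfrac{1}{\eta}d(\vec{x}_o^t)$, and your triangle-inequality conversion of the quadratic terms is the paper's ``parallelogram inequality'' step. The explicit potential notation $\Psi$ makes the telescoping more transparent, but the substance is identical.
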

\begin{proof}
  We will make use of the following two inequalities, which are direct applications of~\eqref{eq:strong convexity of obj} where $L$ is set to $\vec{L}^{t-1} + \vec{m}^t$ and $\vec{L}^{t-1}$, respectively, and $\hat{\vec{x}}$ is set to $\vec{x}^t_o$ and $\vec{x}^t$, respectively:
\begin{align}
   \langle \vec{L}^{t-1} + \vec{m}^t, \vec{x}^t\rangle &\le \langle \vec{L}^{t-1} + \vec{m}^t, \vec{x}^t_o\rangle - \frac{1}{\eta}\big( d(\vec{x}^t) - d(\vec{x}^t_o) \big) - \frac{1}{2\eta} \|\vec{x}^t - \vec{x}^t_o\|^2, \label{eq:sub1}\\
   \langle \vec{L}^{t-1}, \vec{x}^{t-1}_o \rangle &\le \langle \vec{L}^{t-1}, \vec{x}^{t} \rangle - \frac{1}{\eta}\big( d(\vec{x}^{t-1}_o) - d(\vec{x}^t) \big) - \frac{1}{2\eta} \|\vec{x}^t - \vec{x}^{t-1}_o\|^2 \label{eq:sub2}.
\end{align}
With the above, we can write, for all timesteps $t$,
\begin{align*}
  \langle \vec{m}^t, \vec{x}^t - \vec{x}^t_o \rangle &= -\langle \vec{m}^t, \vec{x}^t_o \rangle  + \langle \vec{L}^{t-1} + \vec{m}^t, \vec{x}^t\rangle - \langle \vec{L}^{t-1}, \vec{x}^t\rangle \\
      &\le -\langle \vec{m}^t, \vec{x}_o^t\rangle + \langle \vec{L}^{t-1} + \vec{m}^t, \vec{x}^t\rangle - \langle \vec{L}^{t-1}, \vec{x}^{t-1}_o \rangle - \frac{1}{\eta}\big( d(\vec{x}^{t-1}_o) - d(\vec{x}^t) \big) - \frac{1}{2\eta} \|\vec{x}^t - \vec{x}^{t-1}_o\|^2\\
      &\le -\langle \vec{m}^t, \vec{x}_o^t\rangle + \langle \vec{L}^{t-1} + \vec{m}^t, \vec{x}^t_o\rangle- \langle \vec{L}^{t-1}, \vec{x}^{t-1}_o \rangle - \frac{1}{\eta}\big( d(\vec{x}^t) - d(\vec{x}^t_o) \big) - \frac{1}{\eta}\big( d(\vec{x}^{t-1}_o) - d(\vec{x}^t) \big) \\
               &\hspace{8.5cm}- \frac{1}{2\eta} \|\vec{x}^t - \vec{x}^t_o\|^2   - \frac{1}{2\eta} \|\vec{x}^t - \vec{x}^{t-1}_o\|^2\\
      &= \langle \vec{L}^{t-1}, \vec{x}_o^t - \vec{x}_o^{t-1}\rangle + \frac{1}{\eta}\big( d(\vec{x}_o^t) - d(\vec{x}_o^{t-1})\big) - \frac{1}{2\eta} \|\vec{x}^t - \vec{x}^t_o\|^2   - \frac{1}{2\eta} \|\vec{x}^t - \vec{x}^{t-1}_o\|^2,
\end{align*}
where the first inequality follows from~\eqref{eq:sub2} and the second inequality follows from~\eqref{eq:sub1}. Note that
\begin{align*}
      \langle \vec{L}^{t-1}, \vec{x}_o^t - \vec{x}_o^{t-1}\rangle &= \langle \vec{L}^{t}, \vec{x}_o^t \rangle - \langle \vec{L}^{t-1}, \vec{x}_o^{t-1} \rangle - \langle \vec{\ell}^t, \vec{x}_o^t \rangle.
\end{align*}
Hence,
\[
\langle \vec{m}^t, \vec{x}^t - \vec{x}^t_o \rangle \le \langle \vec{L}^{t}, \vec{x}_o^t \rangle - \langle \vec{L}^{t-1}, \vec{x}_o^{t-1} \rangle - \langle \vec{\ell}^t, \vec{x}^t \rangle +\frac{1}{\eta}\big( d(\vec{x}_o^t) - d(\vec{x}_o^{t-1})\big) - \frac{1}{2\eta} \|\vec{x}^t - \vec{x}^t_o\|^2   - \frac{1}{2\eta} \|\vec{x}^t - \vec{x}^{t-1}_o\|^2.
\]
Finally, summing over all $t$, we obtain
\begin{align}
    \sum_{t=1}^T \langle \vec{m}^t, \vec{x}^t - \vec{x}^t_o\rangle &\le \langle \vec{L}^T, \vec{x}_o^T \rangle - \sum_{t=1}^T \langle \vec{\ell}^t, \vec{x}_o^t\rangle + \frac{1}{\eta}\big( d(\vec{x}_o^T) - d(\vec{x}_o^{0})\big) - \frac{1}{2\eta} \bigg(\sum_{t=1}^T \|\vec{x}^t - \vec{x}^{t}_o\|^2 + \sum_{t=1}^T \|\vec{x}_o^t - \vec{x}^{t-1}_o\|^2\bigg)\nonumber\\
    &\le  \langle \vec{L}^T, \vec{x}_o^T \rangle - \sum_{t=1}^T \langle \vec{\ell}^t, \vec{x}_o^t\rangle +\frac{1}{\eta} \big( d(\vec{x}^T_o) - d(\vec{x}^0_o) \big) - \frac{1}{2\eta} \bigg(\sum_{t=1}^{T-1} \|\vec{x}^t - \vec{x}^{t}_o\|^2 + \sum_{t=1}^{T-1} \|\vec{x}^{t+1} - \vec{x}^{t}_o\|^2\bigg)\nonumber\\
    &\le \langle \vec{L}^T, \vec{x}_o^T \rangle - \sum_{t=1}^T \langle \vec{\ell}^t, \vec{x}_o^t\rangle +\frac{1}{\eta} \big( d(\vec{x}^T_o) - d(\vec{x}^0_o) \big) - \frac{1}{4\eta} \sum_{t=1}^{T-1} \|\vec{x}^{t+1} - \vec{x}^t\|^2,\label{eq:almost done}
\end{align}
where the first inequality comes from simplifying telescopic sums, the second by removing some term from the last parenthesis, and the third from the parallelogram inequality
\[
  \|\vec{a}\|^2 + \|\vec{b}\|^2 \ge \frac{1}{2} \|\vec{a} - \vec{b}\|^2,
\]
which holds for all choice of vectors $\vec{a},\vec{b},\vec{c}$ and norm $\|\cdot\|$. The last step of the proof is to notice that by~\eqref{eq:var ineq omniscient 2}
\begin{align*}
  \langle \vec{L}^T, \vec{x}_o^T \rangle - \sum_{t=1}^T \langle \vec{\ell}^t, \vec{x}_o^t\rangle &\le \langle \vec{L}^T, \hat{\vec{x}} \rangle- \sum_{t=1}^T \langle \vec{\ell}^t, \vec{x}_o^t\rangle + \frac{1}{\eta}\big(d(\hat{\vec{x}}) - d(\vec{x}_o^T)\big)\\
   &= -R_o^T(\hat{\vec{x}}) + \frac{1}{\eta}\big(d(\hat{\vec{x}}) - d(\vec{x}_o^T)\big)
\end{align*}
for all $\hat{\vec{x}}\in\cX$. Substituting into~\eqref{eq:almost done} yields
\begin{align*}
  \sum_{t=1}^T \langle \vec{m}^t, \vec{x}^t - \vec{x}^t_o\rangle &\le -R_o^T(\hat{\vec{x}})+ \frac{1}{\eta}\big(d(\hat{\vec{x}}) - d(\vec{x}_o^0) \big) - \frac{1}{4\eta} \sum_{t=1}^{T-1} \|\vec{x}^{t+1} - \vec{x}^t\|^2\\
   &= -R_o^T + \frac{\Delta_d}{\eta}- \frac{1}{4\eta} \sum_{t=1}^{T-1} \|\vec{x}^{t+1} - \vec{x}^t\|^2,
\end{align*}
as we wanted to show.
\end{proof}
Finally, using Lemma~\ref{lem:technical} together with~\eqref{eq:omniscient completion} and Lemma~\ref{lem:oftrl omniscient}, we obtain the proof of Theorem~\ref{thm:oftrl}.

\thmoftrl*
\begin{proof}[Proof of Theorem~\ref{thm:oftrl}]
As we already mentioned, the fundamental idea of the proof is to bound how much regret is generated from by the fact that the predictions $\vec{m}^t$ are \emph{not} perfect. Hence, the first step is to relate the notion of regret to the fact that the decisions produced are $\vec{x}^t$ instead of $\vec{x}_o^t$:
\begin{align*}
  R^T &= \max_{\hat{\vec{x}} \in \cX} \left\{\sum_{t=1}^T \langle \vec{\ell}^t, \vec{x}^t - \hat{\vec{x}} \rangle \right\} \\
      &= \max_{\hat{\vec{x}} \in \cX} \left\{\sum_{t=1}^T \langle \vec{\ell}^t, \vec{x}_o^t - \hat{\vec{x}} \rangle \right\} + \sum_{t=1}^T \langle \vec{\ell}^t - \vec{m}^t, \vec{x}^t - \vec{x}_o^t \rangle + \sum_{t=1}^T \langle \vec{m}^t, \vec{x}^t - \vec{x}_o^t \rangle.
\end{align*}
The sum inside of the max is bounded according to the analysis of the omniscient case (Lemma~\ref{lem:oftrl omniscient}). Using the (generalized) Cauchy-Schwarz inequality and the Lipschitz continuity of $\tilde{x}$, the sum in the middle can be easily bounded as
\[
  \sum_{t=1}^T \langle \vec{\ell}^t - \vec{m}^t, \vec{x}^t - \vec{x}_o^t \rangle \le \eta \sum_{t=1}^T \|\vec{\ell}^t - \vec{m}^t\|_\ast^2.
\]
Finally, the third summation was studied in Lemma~\ref{lem:technical}. Putting everything together, we conclude that
\begin{align*}
  R^T &\le R_o^T + \eta\sum_{t=1}^T \|\vec{\ell}^t - \vec{m}^t\|_\ast^2 - R_o^T + \frac{\Delta_d}{\eta} - \frac{1}{4\eta}\sum_{t=1}^T \|\vec{x}^t - \vec{x}^{t-1}\|^2 \\
      &\le \frac{\Delta_d}{\eta} + \eta \sum_{t=1}^T \|\vec{\ell}^t - \vec{m}^t\|_\ast^2 - \frac{1}{4\eta}\sum_{t=1}^T \|\vec{x}^t - \vec{x}^{t-1}\|^2.
\end{align*}
\end{proof}

\section{Proofs: Dilated Entropy Distance Generating Function}
In the proof of Theorem~\ref{thm:strongly convex dgf}, we will make use of the following, useful fact.

\begin{restatable}{lemma}{lemdgfgradient}\label{lem:dgf gradient}
    Let $d(\vec{x}) \!=\! \sum_{j\in \cJ}\! x_{p_j} d_j(\vec{x}_j/x_{p_j})$ be a dilated DGF. Then, for all $\vec{x}$ we have
$
    \langle \nabla d(\vec{x}) , \vec{x} \rangle \!=\! d(\vec{x}).
$
\end{restatable}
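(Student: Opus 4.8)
The plan is to exploit the fact that a dilated DGF is positively homogeneous of degree one and then invoke Euler's identity for homogeneous functions.

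First I would verify the homogeneity directly from the definition. Writing $d(\vec x) = \sum_{j\in\cJ} x_{p_j}\, d_j(\vec x_j / x_{p_j})$ and scaling $\vec x$ by any $t>0$, every dilation factor $x_{p_j}$ picks up a factor $t$ while each local argument $\vec x_j / x_{p_j}$ is unchanged, so
\[
  d(t\vec x) = \sum_{j\in\cJ} (t x_{p_j})\, d_j\!\left(\frac{t\vec x_j}{t x_{p_j}}\right) = t \sum_{j\in\cJ} x_{p_j}\, d_j\!\left(\frac{\vec x_j}{x_{p_j}}\right) = t\, d(\vec x).
\]
Here $d$ is understood through its defining formula, which makes sense for every vector whose sequence-form entries are positive — in particular on the interior of $\cX$ and on the whole ray through any such point — and scaling preserves the ratios $\vec x_j/x_{p_j}$, so it keeps us in the region where each $d_j$, hence $d$, is differentiable.

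Next, I would differentiate the identity $d(t\vec x) = t\, d(\vec x)$ with respect to $t$. The chain rule gives $\langle \nabla d(t\vec x), \vec x\rangle = d(\vec x)$ for all $t>0$, and evaluating at $t=1$ yields $\langle \nabla d(\vec x),\vec x\rangle = d(\vec x)$, which is the claim. This is precisely Euler's theorem on homogeneous functions, specialized to degree one.

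I do not anticipate a real obstacle: the only point requiring a line of care is the domain issue, namely checking that the one-parameter family $t\mapsto t\vec x$ stays inside the region where $d$ is differentiable so that the chain-rule step is legitimate; as noted above this holds because dilation leaves all the simplex-normalized local arguments $\vec x_j/x_{p_j}$ fixed. An alternative, fully elementary route would be to compute $\nabla d$ coordinate-wise by the chain rule — each entry $x_{ja}$ contributes once as a component of $\vec x_j$ in the term for $j$ and once as the dilation variable $x_{p_{j'}}$ in the term for each child $j'\in\childinfosets{j,a}$ — and then observe that, upon forming $\langle\nabla d(\vec x),\vec x\rangle$, the ``child'' contributions telescope against the local-gradient terms and the sum collapses to $d(\vec x)$; but the homogeneity argument is shorter, so I would present that one.
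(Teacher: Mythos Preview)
Your proposal is correct and cleaner than the paper's own proof. The paper takes precisely the ``alternative, fully elementary route'' you sketch at the end: it computes the partial derivatives $\partial d/\partial x_{ja}$ and $\partial d/\partial x_\phi$ explicitly via the chain rule, then forms $\langle\nabla d(\vec x),\vec x\rangle$ and observes that the inner-product terms coming from the child decision points telescope against the local-gradient terms, leaving exactly $\sum_{j\in\cJ} x_{p_j}\, d_j(\vec x_j/x_{p_j}) = d(\vec x)$. Your homogeneity/Euler argument sidesteps this bookkeeping entirely --- it never needs the explicit form of the partials and would apply to any function built by the dilation scheme --- whereas the paper's computation has the side benefit of producing the closed-form gradient of $d$, which is reused later in the proof of the prox-mapping decomposition (Proposition~\ref{prop:decomposition local prox}). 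Both approaches are valid; yours is shorter for this lemma in isolation.
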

\begin{proof}
    With some simple algebra, it is easy to show that the partial derivatives with respect to $x_{ja}$ is
    \[
      \frac{\partial d}{\partial x_{ja}}(\vec{x}) =
                           \frac{\partial  r_j}{\partial x_{ja}}\!\left(\frac{\vec{x}_j}{x_{p_j}}\right)
                         + \sum_{j' \in C_{ja}} d_{j'}\!\left(\frac{\vec{x}_{j'}}{x_{ja}}\right)
                         - \sum_{j' \in C_{ja}} \left\langle \nabla d_{j'}\!\left(\frac{\vec{x}_{j'}}{x_{ja}}\right)\!,\, \frac{\vec{x}_{j'}}{x_{ja}} \right\rangle,
    \]
    and that the partial derivative of $d$ with respect to $\vec{x}_\phi$ is
    \[
      \frac{\partial d}{\partial x_\phi}(\vec{x}) = d_{r}\!\left(\frac{\vec{x}_{r}}{x_\phi}\right) - \left\langle \nabla d_{r}\!\left(\frac{\vec{x}_{r}}{x_\phi}\right)\!,\, \frac{\vec{x}_{r}}{x_\phi} \right\rangle.
    \]
    As a consequence, for all $j\in\cJ$,
    \begin{align*}
      \left\langle\nabla_{\vec{x}_j} d(\vec{x}),\, \vec{x}_j\right\rangle &=
                           \left\langle \nabla d_j\!\left(\frac{\vec{x}_j}{x_{p_j}}\right), \vec{x}_j \right\rangle
                         - \sum_{a \in A_j}\sum_{j' \in C_{ja}} \left\langle \nabla d_{j'}\!\left(\frac{\vec{x}_{j'}}{x_{ja}}\right)\!,\, \vec{x}_{j'} \right\rangle
                         + \sum_{a \in A_j}\sum_{j' \in C_{ja}} x_{ja}\, d_{j'}\!\left(\frac{\vec{x}_{j'}}{x_{ja}}\right)
    \end{align*}
    Hence,
    \begin{align*}
      \langle \nabla d(\vec{x}), \vec{x} \rangle &= \frac{\partial d}{\partial x_{\phi}}(\vec{x}) + \sum_j \langle \nabla_{\vec{x}_j} d(\vec{x}),\, \vec{x}_j\rangle \\ &= d_{r}\!\left(\frac{\vec{x}_r}{x_\phi}\right) + \sum_{j\in\cJ} \sum_{a \in A_j}\sum_{j' \in C_{ja}} x_{ja}\, d_{j'}\!\left(\frac{\vec{x}_{j'}}{x_{ja}}\right) \\
&= \sum_{j\in\cJ} x_{p_j} d_{j}\!\left(\frac{\vec{x}_j}{x_{p_j}}\right) = d(\vec{x}),
    \end{align*}
    where the second inequality comes from the observation that the inner products with the gradients cancel out.
\end{proof}

\thmstronglconvexdgf*
\begin{proof}
Let $x,y$ be arbitrary vectors in $\in\cX$. Using the strong convexity of $d_j$ and the fact that $\vec{x} \in \cX$, we obtain for all $j\in\cJ$
\begin{align*}
 \sum_{a \in A_j}\sum_{j'\in C_{ja}} y_{ja}\, d_{j'}\!\left(\frac{\vec{y}_{j'}}{y_{ja}}\right)
                    &\ge \sum_{a \in A_j}\sum_{j'\in C_{ja}}
                                y_{ja}\cdot\left(   d_{j'}\!\left(\frac{\vec{x}_{j'}}{x_{ja}}\right)
                                        +\left\langle \nabla d_{j'}\!\left(\frac{\vec{x}_{j'}}{x_{ja}}\right),\frac{\vec{y}_{j'}}{y_{ja}}-\frac{\vec{x}_{j'}}{x_{ja}}\right\rangle
                                        +\frac{\mu_{j'}}{2}\left\|\frac{\vec{y}_{j'}}{y_{ja}}-\frac{\vec{x}_{j'}}{x_{ja}}\right\|_2^2
                                \right)\\
                    &=
                      \left(
                         \sum_{a \in A_j}\sum_{j'\in C_{ja}} y_{ja}\, d_{j'}\!\left(\frac{\vec{x}_{j'}}{x_{ja}}\right)
                         - \sum_{a \in A_j}\sum_{j' \in C_{ja}}y_{ja} \left\langle \nabla d_{j'}\!\left(\frac{\vec{x}_{j'}}{x_{ja}}\right)\!,\, \frac{\vec{x}_{j'}}{x_{ja}}\right\rangle
                      \right)\\[2mm]
                   &\hspace{1cm}
                                 - \left\langle \nabla d_j\!\left(\frac{\vec{x}_j}{x_{p_j}}\right), \vec{y}_j \right\rangle
                                 + \sum_{a \in A_j}\sum_{j' \in C_{ja}} \left\langle \nabla d_{j'}\!\left(\frac{\vec{x}_{j'}}{x_{ja}}\right)\!,\, \vec{y}_{j'}\right\rangle\\
                   &\hspace{1cm}
                                 + \sum_{a \in A_j}\sum_{j'\in C_{ja}}\frac{\mu_{j'}}{2} y_{ja}\left\|\frac{\vec{y}_{j'}}{y_{ja}}-\frac{\vec{x}_{j'}}{x_{ja}}\right\|_2^2\\
                   &= \left\langle\nabla_{\vec{x}_j} d(\vec{x}),\, \vec{y}_j\right\rangle - \left\langle \nabla d_j\!\left(\frac{x_j}{x_{p_j}}\right), \vec{y}_j \right\rangle
                      + \sum_{a \in A_j}\sum_{j' \in C_{ja}} \left\langle \nabla d_{j'}\!\left(\frac{\vec{x}_{j'}}{x_{ja}}\right)\!,\, \vec{y}_{j'}\right\rangle\\
                   &\hspace{1cm}
                   + \sum_{a \in A_j}\sum_{j'\in C_{ja}}\frac{\mu_{j'}}{2} y_{ja}\left\|\frac{\vec{y}_{j'}}{y_{ja}}-\frac{\vec{x}_{j'}}{x_{ja}}\right\|_2^2.
\end{align*}
Summing over all $j$, we obtain
\begin{align}
  d(\vec{y}) &\ge \langle \nabla d(\vec{x}), \vec{y}\rangle + \sum_j \frac{\mu_{j}}{2} y_{p_j}\left\|\frac{\vec{y}_j}{y_{p_j}}-\frac{\vec{x}_j}{x_{p_j}}\right\|_2^2 \nonumber\\
        &= d(\vec{x}) + \langle \nabla d(\vec{x}), \vec{y} - \vec{x}\rangle + \sum_j \frac{\mu_{j}}{2} y_{p_j}\left\|\frac{\vec{y}_j}{y_{p_j}}-\frac{\vec{x}_j}{x_{p_j}}\right\|_2^2 \nonumber\\
        &\ge d(\vec{x}) + \langle \nabla d(\vec{x}), \vec{y} - \vec{x}\rangle + \sum_j \frac{\mu_{j}}{2} y^2_{p_j}\left\|\frac{\vec{y}_j}{y_{p_j}}-\frac{\vec{x}_j}{x_{p_j}}\right\|_2^2 , \label{eq:plugintome}
\end{align}
where the equality comes from Lemma~\ref{lem:dgf gradient} and the second   inequality from the fact that $y_{p_j} \in [0, 1]$. Note that for all $j$ and for all $\vec{x}_j, \vec{y}_j$,
\begin{align}
    \frac{1}{2}\left\|\vec{y}_j - \vec{x}_j\right\|_2^2
      &= \frac{1}{2}\left\|y_{p_j} \left(\frac{\vec{y}_j}{y_{p_j}} - \frac{\vec{x}_j}{x_{p_j}}\right) + (y_{p_j} - x_{p_j})\frac{\vec{x}_j}{x_{p_j}}\right\|_2^2  \nonumber\\
      &\le y_{p_j}^2 \left\|\frac{\vec{y}_j}{y_{p_j}} - \frac{\vec{x}_j}{x_{p_j}}\right\|_2^2 + (y_{p_j} - x_{p_j})^2\left\|\frac{\vec{x}_j}{x_{p_j}}\right\|_2^2 \nonumber\\
      &\le y_{p_j}^2 \left\|\frac{\vec{y}_j}{y_{p_j}} - \frac{\vec{x}_j}{x_{p_j}}\right\|_2 + (y_{p_j} - x_{p_j})^2, \label{eq:invertme}
\end{align}
where the first inequality is by parallelogram inequality, while the second inequality follows from the fact that $\vec{x}_j/x_{p_j}\in \Delta^{n_j}$ and therefore it has norm upper-bounded by 1. By inverting~\eqref{eq:invertme}, we obtain
\begin{align*}
    \sum_{j} \mu_j y_{p_j}^2 \left\|\frac{\vec{y}_j}{y_{p_j}} - \frac{\vec{x}_j}{x_{p_j}}\right\|_2^2
      &\ge \frac{1}{2}\sum_{j} \mu_j\|\vec{y}_j - \vec{x}_j\|^2 - \sum_{j} \mu_j (y_{p_j} - x_{p_j})^2\\
      &= \sum_{j} \sum_{a\in A_j} \left(\frac{\mu_j}{2} - \sum_{j' \in C_{ja}} \mu_{j'}\right) (y_{ja} - x_{ja})^2.
\end{align*}
Finally, plugging into~\eqref{eq:plugintome}
\[
\sum_{j} \mu_j y_{p_j} \left\|\frac{\vec{y}_j}{y_{p_j}} - \frac{\vec{x}_j}{x_{p_j}}\right\|_2^2
\ge
  \bar\sigma \sum_{j}\sum_{a\in A_j}(x_{ja} - y_{ja})^2 = \bar \sigma \|\vec{y} - \vec{x}\|_2^2,
\]
and we conclude that $d$ is $\bar\sigma$-strongly convex with respect to the Euclidean norm over $\cX$, like we wanted to show.
\end{proof}

\section{Proofs: Local Regret Minimization}
\propdecompositionlocalprox*
\begin{proof}
  We will show the decomposition assuming that $j$ is the root decision
point, the general case follows by induction. The prox mapping
\eqref{eq:sequence form prox} can be written as
  \begin{align*}
    \prox(\vec{g},\hat{\vec{x}}) &= \argmin_{\vec{x} \in \cX} \big\{ \langle \vec{g}, \vec{x} \rangle +
D(\vec{x} \dmid \hat{\vec{x}})\big\} \\
    &= \argmin_{\vec{x} \in \cX} \big\{ \langle \vec{g}, \vec{x} \rangle + d(\vec{x}) -
d(\hat{\vec{x}}) - \langle \nabla d(\hat{\vec{x}}), \vec{x} - \hat{\vec{x}} \rangle \big\}\\
    &= \argmin_{\vec{x} \in \cX} \big\{ \langle \vec{g} - \nabla d(\hat{\vec{x}}), \vec{x} \rangle +
d(\vec{x}) \big\}.
  \end{align*}
  Now we can use the fact that we are using a dilated DGF and the sequential
structure of SDM problems to write the problem of finding just the minimizer
for $j$ as follows
  \begin{align*}
    &\argmin_{\vec{b}_j \in \Delta^{n_j}} \bigg\{ \langle \vec{g}_j - \nabla_j d(\hat{\vec{x}}), \vec{b}_j \rangle
+ d_j(\vec{b}_j)  + \sum_{a \in A_j} \sum_{j' \in \childinfosets{j,a}}
b_{j,a}\bigg[\min_{\vec{x} \in \cX_{\subt{j'}}} \langle \vec{g}_{\subt{j'}} - \nabla
d_{\subt{j'}}(\hat{\vec{x}}_{\subt{j'}}), \vec{x} \rangle +
d_{\subt{j'}}(\vec{x}) \bigg] \bigg\}\\
    &=
    \argmin_{\vec{b}_j \in \Delta^{n_j}} \bigg\{\langle \vec{g}_j - \nabla_j d(\hat{\vec{x}}), \vec{b}_j \rangle
+ d_j(\vec{b}_j)  - \sum_{a \in A_j} \sum_{j' \in \childinfosets{j,a}}
b_{j,a} d^*_{\subt{j'}}(-\vec{g}_{\subt{j'}} + \nabla
d_{\subt{j'}}(\hat{\vec{x}}_{\subt{j'}}))\bigg\}.
  \end{align*}
  Now we note that the index of $\nabla_j d(\hat{\vec{x}})$ corresponding to
each $a \in A_j$ can be expanded as
  \[
    \nabla_{j,a} d(\hat{\vec{x}}) = \nabla_{a} d_j\left(\frac{\hat{\vec{x}}_j}{\xhat_{p_j}}\right) + \sum_{j' \in
\childinfosets{j,a}} \left( d_{j'}\left(\frac{\hat{\vec{x}}_{j'}}{\xhat_{p_{j'}}}\right) - \left\langle  \nabla
d_{j'}\left(\frac{\hat{\vec{x}}_{j'}}{\xhat_{p_{j'}}}\right), \frac{\hat{\vec{x}}_{j'}}{\xhat_{p_{j'}}} \right\rangle \right).
  \]
  Plugging this into our expression for the minimizer for $j$, and noting that we can add and remove $d_j\left(\frac{\hat{\vec{x}}_j}{\xhat_{p_j}}\right)$ without changing the $\argmin$, gives
  \begin{align*}
    & \argmin_{\vec{b}_j \in \Delta^{n_j}}\ \Bigg\{\langle \vec{g}_j , \vec{b}_j \rangle + D_j\bigg(\vec{b}_j \ \bigg\|\ \frac{\hat{\vec{x}}_j}{\xhat_{p_j}}\bigg)
- \sum_{a \in A_j} \sum_{j' \in \childinfosets{j,a}} b_{j,a} \Bigg[
d_{\subt{j'}}^*\big(-\vec{g}_{\subt{j'}} + \nabla d_{\subt{j'}}(\hat{\vec{x}}_{\subt{j'}})\big) -
d_{j'}\bigg(\frac{\hat{\vec{x}}_j}{\xhat_{p_j}}\bigg)\\
  & \hspace{9.5cm}+ \left\langle  \nabla
d_{j'}\left(\frac{\hat{\vec{x}}_{j'}}{\xhat_{p_{j'}}}\right), \frac{\hat{\vec{x}}_{j'}}{\xhat_{p_{j'}}} \right\rangle\Bigg]\Bigg\} \\
    &= \argmin_{\vec{b}_j \in \Delta^{n_j}}\ \langle \hat{\vec{g}}_j , \vec{b}_j \rangle +
      D_j\bigg(\vec{b}_j \ \bigg\|\ \frac{\hat{\vec{x}}_j}{\xhat_{p_j}}\bigg)
  \end{align*}
  as we wanted to show.
\end{proof}

\thmomdseparable*
\begin{proof}
  Note that the OMD update in \eqref{eq:omd} is a prox mapping as given in Proposition~\ref{prop:decomposition local prox}. The update is:
\begin{align}
  \vec{x}^{t+1} = \argmin_{\vec{x} \in \cX} \bigg\{ \langle \vec{\ell}^{t}, \vec{x} \rangle + \frac{1}{\eta} D(\vec{x} \dmid \vec{x}^{t})  \bigg\}
  \Rightarrow
  \vec{x}^{t+1}_j = \vec{x}^{t+1}_{p_j} \ \argmin_{\vec{b}_j \in \symp{n_j}} \bigg\{ \langle \hat{\vec{\ell}}^{t}, \vec{x} \rangle + \frac{1}{\eta} D_j\bigg(\vec{b}_j\ \bigg\|\ \frac{\vec{x}^{t}}{x^t_{p_j}}\bigg)  \bigg\}
  .
\end{align}
  But this is exactly the same as the OMD update resulting from running OMD on simplex $j$ with the modified loss, using DGF $d_j$ and its associated Bregman divergence. The same logic shows that Optimistic OMD is a local variant of itself, since it is a repeated  sequence of two prox mappings.
\end{proof} 
\end{document}